\newif\ifdraft\drafttrue
\newcommand{\naricom}[1]{\textcolor{red}{[(成澤)#1]}}
\newcommand{\sadacom}[1]{\textcolor{blue}{[(成定)#1]}}
\newcommand{\todo}[1]{{\color{red}{[ToDo: #1]}}}
\newcommand{\naricom}[1]{}
\newcommand{\inecom}[1]{}
\newcommand{\sadacom}[1]{}
\newcommand{\todo}[1]{}
\newcommand{\parent}{\mathsf{parent}}
\newcommand{\PLV}{\mathsf{PLV}}
\newcommand{\NLV}{\mathsf{NLV}}
\newcommand{\Pos}{\mathsf{Pos}}
\newcommand{\substr}[3]{#1[#2..#3]}
\newcommand{\rev}[1]{#1^R}
\newcommand{\suftree}[1]{\mathsf{STree}(#1)}
\newcommand{\sa}{\mathsf{SA}}
\newcommand{\rsa}{\mathsf{SA}^{-1}}
\newcommand{\lcp}{\mathsf{LCP}}
\newcommand{\lca}[3]{\mathsf{LCA}_{#1}(#2,#3)}
\newcommand{\nma}[2]{\mathsf{NMA}_{#1}(#2)}
\newcommand{\rmq}[3]{\mathsf{RMQ}_{#1}(#2,#3)}
\newcommand{\armlen}{\mathsf{armlen}}
\newcommand{\lcplen}{\mathsf{lcp}}
\newcommand{\occ}{\mathit{occ}}
\newcommand{\LongestSAGP}{\mathsf{SAGP}}
\newcommand{\LongestSAGPone}{\mathsf{SAGP}_1}
\newcommand{\LongestSAGPtwo}{\mathsf{SAGP}_2}
\newcommand{\Pals}{\mathsf{Pals}}
\newcommand{\op}{\mathsf{op}}
\newcommand{\Active}{\mathsf{Active}}
\newcommand{\Str}{\mathit{str}}
\newcommand{\LeftMost}{\mathsf{LMost}}
\newcommand{\FindR}{\mathsf{findR}}
\newcommand{\FindRarray}{\mathsf{FindR}}
\newcommand{\NextPos}{\mathsf{NextPos}}
\newcommand{\minLin}{\mathit{min}_\mathit{in}}
\newcommand{\minLout}{\mathit{min}_\mathit{out}}
\newcommand{\LPrF}{\mathsf{LPrF}}
\newtheorem{theorem}{Theorem}
\newtheorem{lemma}{Lemma}
\newtheorem{example}{Example}
\def\blfootnote{\xdef\@thefnmark{}\@footnotetext}
\begin{document}
		
		\title{Efficient computation of longest single-arm-gapped palindromes in a string}
		
		\author[1]{Shintaro~Narisada}
		\author[1]{Diptarama~Hendrian}
		\author[1]{Kazuyuki~Narisawa}
		\author[2]{Shunsuke~Inenaga}
		\author[1]{Ayumi~Shinohara}
		\affil[1]{Graduate School of Information Sciences, Tohoku University, Sendai, Japan}
		\affil[2]{Department of Informatics, Kyushu University, Fukuoka, Japan}
		
		\date{}
		
		\maketitle
		
		\begin{abstract}

In this paper, we introduce new types of approximate palindromes called
\emph{single-arm-gapped palindromes} (shortly \emph{SAGPs}).
A SAGP contains a gap in either its left or right arm,
which is in the form of either $wguc \rev{u} \rev{w}$ or
$wuc \rev{u}g\rev{w}$,
where $w$ and $u$ are non-empty strings,
$\rev{w}$ and $\rev{u}$ are respectively the reversed strings of $w$ and $u$,
$g$ is a string called a gap, and $c$ is either a single character or the empty string.
Here we call $wu$ and $\rev{u} \rev{w}$ the arm of the SAGP,
and $|uv|$ the length of the arm.
We classify SAGPs into two groups:
those which have $uc\rev{u}$ as a maximal palindrome (type-1),
and the others (type-2).
We propose several algorithms to compute type-1 SAGPs
with longest arms occurring in a given string, based on suffix arrays.
Then, we propose a linear-time algorithm to compute all type-1 SAGPs
with longest arms, based on suffix trees.
Also, we show how to compute type-2 SAGPs with longest arms
in linear time.
We also perform some preliminary experiments to show practical
performances of the proposed methods.
 
		\end{abstract}
		
		\blfootnote{\textcopyright 2019. This manuscript version is made available under the \doclicenseNameRef license
		\doclicenseImage[imagewidth=2cm]}

	

\section{Introduction}
A palindrome is a string that reads the same forward and backward.
Discovering palindromic structures in strings is a classical and important task in combinatorics on words and string algorithmics
(e.g., see~\cite{DroubayP99,GlenJWZ09,Manacher75,ApostolicoBG95}).
A natural extension to palindromes is to  
allow for a \emph{gap} between the left and right arms of palindromes.
Namely, a string $x$ is called a gapped palindrome 
if $x = wg\rev{w}$ for some strings $w,g$ with $|w| \geq 1$ and $|g| \geq 0$.
For instance, string $\mathtt{abbca | acbba}$ is a palindrome and
string $\mathtt{abbca | baa | acbba}$ is a gapped palindrome with $\mathtt{baa}$ as a gap.
Finding gapped palindromes has applications in bioinformatics,
such as finding secondary structures of RNA sequences called \emph{hairpins}~\cite{Gusfield1997}.
If we further allow for another gap \emph{inside either arm},
then such a palindrome can be written as 
$wg_2ug_1\rev{u}\rev{w}$ or $wug_1\rev{u}g_2\rev{w}$ 
for some strings $w, g_1, g_2, u$ with
$|u| \geq 1$, $|g_1| \geq 0$, $|g_2| \geq 0$, and $|w| \geq 1$.
These types of palindromes characterize
\emph{hairpins with bulges} in RNA sequences,
known to occur frequently in the secondary structures 
of RNA sequences~\cite{ShiWWT14}.
Notice that the special case where $|g_1| \leq 1$ and $|g_2| = 0$
corresponds to usual palindromes,
and the special case where $|g_1| \geq 2$ and $|g_2| = 0$
corresponds to gapped palindromes.

In this paper, we consider a new class of generalized palindromes
where $|g_1| \leq 1$ and $|g_2| \geq 1$,
i.e., palindromes with gaps only \emph{inside} one of its arms.
We call such palindromes as \emph{single-arm-gapped palindromes} (\emph{SAGPs}).
For instance, string $\mathtt{abb | ca | cb | bc | bba}$
is an SAGP of this kind, taking $w = \mathtt{abb}$,
$g_1 = \varepsilon$ (the empty string),
$g_2 = \mathtt{ca}$, and $u = \mathtt{cb}$.

We are interested in occurrences of SAGPs as substrings of a given string $T$.
For simplicity, we will concentrate on
SAGPs with $|g_1| = 0$ containing a gap in their left arms.
However, slight modification of
all the results proposed in this paper can easily be applied to
all the other cases.
For any occurrence of an SAGP $wguu^Rw^R$ beginning at position $b$ in $T$,
the position $b+|wgu|-1$ is called the \emph{pivot} of the occurrence
of this SAGP.
This paper proposes various algorithms to solve
the problem of computing longest SAGPs
for every pivot in a given string $T$ of length $n$.
We classify longest SAGPs into two groups:
those which have $u\rev{u}$ as a maximal palindrome (\emph{type-1}),
and the others (\emph{type-2}).
Firstly, we show a na\"ive $O(n^2)$-time algorithm
for computing type-1 longest SAGPs.
Secondly, we present a simple but practical $O(n^2)$-time algorithm
for computing type-1 longest SAGPs based on
simple scans over the suffix array~\cite{ManberM93}.
We also show that the running time of this algorithm
can be improved by using a dynamic predecessor/successor data structure.
If we employ the van Emde Boas tree~\cite{Boas75},
we achieve $O((n + \occ_1) \log\log n)$-time solution,
where $\occ_1$ is the number of type-1 longest SAGPs to output.
Finally, we present an $O(n+\occ_1)$-time solution
based on the suffix tree data structure~\cite{Weiner73}.
For type-2 longest SAGPs,
we show an $O(n + \occ_2)$-time algorithm,
where $\occ_2$ is the number of type-2 longest SAGPs to output.
Combining the last two results, we obtain an optimal $O(n + \occ)$-time
algorithm for computing all longest SAGPs,
where $\occ$ is the number of outputs.

We performed preliminary experiments to compare practical performances
of our algorithms for finding type-1 longest SAGPs.
We compare the running time of
the na\"ive algorithm, the $O(n^2)$-time suffix array based algorithm,
and the improved suffix array based algorithm with 
several kinds of predecessor/successor data structures in the experiments.

\subsection*{Related work}
For a \emph{fixed} gap length $d$,
one can find all gapped palindromes $wg\rev{w}$ with $|g| = d$
in the input string $T$ of length $n$ in $O(n)$ time~\cite{Gusfield1997}.
Kolpakov and Kucherov~\cite{KolpakovK09} showed 
an $O(n+L)$-time algorithm to compute \emph{long-armed palindromes} in $T$, 
which are gapped palindromes $wg\rev{w}$ such that $|w| \geq |g|$.
Here, $L$ denotes the number of outputs.
They also showed how to compute, in $O(n+L)$ time,
\emph{length-constrained palindromes} which are gapped palindromes
$wg\rev{w}$ such that the gap length $|g|$ is in a predefined range.
Fujishige et al.~\cite{FujishigeNIBT16} proposed online algorithms to compute 
long-armed palindromes and length-constrained palindromes
from a given string $T$.
A gapped palindrome $wg\rev{w}$ is an \emph{$\alpha$-gapped palindrome},
if $|wg| \leq \alpha|w|$ for $\alpha \geq 1$.
Gawrychowski et al.~\cite{GawrychowskiIIK18}
showed that the maximum number of $\alpha$-gapped palindromes
occurring in a string of length $n$ is at most $28\alpha n+7n$.
Very recently, 
I and K\"oppl~\cite{abs-1802-10355} showed an improved bound 
$7(\pi^2/6+1/2) \alpha n - 5n - 1$ for the maximum number of $\alpha$-gapped palindromes.
Since long-armed palindromes are $2$-gapped palindromes for $\alpha = 2$,
$L = O(n)$ and thus Kolpakov and Kucherov's algorithm runs in $O(n)$ time.
Gawrychowski et al.~\cite{GawrychowskiIIK18} also proposed 
an $O(\alpha n)$-time algorithm to compute all $\alpha$-gapped 
palindromes in a given string for any predefined $\alpha \geq 1$.
We emphasize that none of the above algorithms 
can directly be applied to computing SAGPs.

Crochemore et al.~\cite{Crochemore2010} proposed a linear-time algorithm to compute longest previous reverse factor array $\LPrF$ of a string $T$,
where $\LPrF[i]$ stores the longest factor $w$ started at $i$ such that $w^R$
occurs at $T[1..i-1]$ which can be considered as a gapped palindrome.
Later Dumitran et al.~\cite{Dumitran2015, Dumitran2017} studied similar arrays that bound the length of gaps.

A preliminary version of this work appeared in~\cite{Narisada2017}.


\section{Preliminaries}
Let $\Sigma = \{1, \ldots, \sigma\}$ be an integer alphabet of size $\sigma$.
An element of $\Sigma^{*}$ is called a \emph{string}.
For any string $w$, $|w|$ denotes the length of $w$.
The empty string is denoted by $\varepsilon$.
Let $\Sigma^{+} = \Sigma^{*}\setminus\{\varepsilon\}$.
For any $1 \leq i \leq |w|$,
$w[i]$ denotes the $i$-th symbol of $w$.
For a string $w=xyz$, strings $x$, $y$, and $z$ are called 
a \emph{prefix}, \emph{substring}, and \emph{suffix} of $w$, respectively.
The substring of $w$ that begins at position $i$ and ends at position $j$ 
is denoted by $\substr{w}{i}{j}$ for $1 \leq i \leq j \leq |w|$, 
i.e., $\substr{w}{i}{j}=  w[i] \cdots w[j]$.
For $j > i$, let $\substr{w}{i}{j} = \varepsilon$ for convenience.
For two strings $X$ and $Y$,
let $\lcplen(X, Y)$ denote the length of the longest common prefix of $X$ and $Y$.

For any string $x$, let $\rev{x}$ denote 
the reversed string of $x$, i.e. $\rev{x} = x[|x|] \cdots x[1]$.
A string $p$ is called a \emph{palindrome} if $p = \rev{p}$.
Let $T$ be any string of length $n$.
Let $p = T[b..e]$ be a palindromic substring of $T$.
The position $i = \lfloor \frac{b+e}{2} \rfloor$ is called the \emph{center}
of this palindromic substring $p$.
The  palindromic substring $p$ is said to be the \emph{maximal palindrome}
centered at $i$ iff there are no longer palindromes than $p$
centered at $i$, namely,
$T[b-1] \neq T[e+1]$, $b = 1$, or $e = n$.

A string $x$ is called a \emph{single-arm-gapped palindrome}~(SAGP)
if $x$ is in the form of either
$wguc\rev{u}\rev{w}$ or $wuc\rev{u}g\rev{w}$,
with some non-empty strings $w,g,u \in \Sigma^+$
and $c \in \Sigma \cup \{\varepsilon\}$.
For simplicity and ease of explanations,
in what follows we consider only SAGPs whose left arms contain gaps
and $c = \varepsilon$, namely, those of form $wgu \rev{u}\rev{w}$.
But our algorithms to follow can easily be modified to compute
other forms of SAGPs occurring in a string as well.

Let $b$ be the beginning position of an occurrence of 
a SAGP $wgu\rev{u}\rev{w}$ in $T$,
namely $T[b..b+2|wu|+|g|-1] = wgu\rev{u}\rev{w}$.
The position $i = b + |wgu|-1$ is called the \emph{pivot} of this occurrence
of the SAGP.
This position $i$ is also the center of the palindrome $u\rev{u}$.
An SAGP $wgu\rev{u}\rev{w}$ for pivot $i$ in string $T$ is represented by
a quadruple $(i, |w|, |g|, |u|)$ of integers.
In what follows, we will identify the quadruple $(i, |w|, |g|, |u|)$
with the corresponding SAGP $wgu\rev{u}\rev{w}$ for pivot $i$.

For any SAGP $x = wgu\rev{u}\rev{w}$,
let $\armlen(x)$ denote the length of the arm of $x$,
namely, $\armlen(x) = |wu|$.
A substring SAGP $y = wgu\rev{u}\rev{w}$ for pivot $i$ in a string $T$
is said to be a \emph{longest} SAGP for pivot $i$,
if for any SAGP $y'$ for pivot $i$ in $T$, $\armlen(y) \geq \armlen(y')$.

Notice that there can be different choices of
$u$ and $w$ for the longest SAGPs at the same pivot.
For instance, consider string $\mathtt{ccabcabbace}$.
Then, $(7, 1, 3, 2) = \mathtt{c | \underline{abc} | ab | ba | c}$
and $(7, 2, 3, 1) = \mathtt{ca | \underline{bca} | b | b | ac}$ are
both longest SAGPs (with arm length $|wu| = 3$) for the same pivot $7$,
where the underlines represent the gaps.
Of all longest SAGPs for each pivot $i$,
we regard those that have longest palindromes $u\rev{u}$ centered at $i$
as \emph{canonical} longest SAGPs for pivot $i$.
In the above example, $(7, 1, 3, 2) = \mathtt{c | \underline{abc} | ab | ba | c}$
is a canonical longest SAGP for pivot $7$,
while $(7, 2, 3, 1) = \mathtt{ca | \underline{bca} | b | b | ac}$ is not.
Let $\LongestSAGP(T)$ be the set of canonical longest SAGPs for all pivots in $T$.
In this paper, we present several algorithms to compute $\LongestSAGP(T)$.

A position $i$ in string $T$ is said to be
of \emph{type-1} if there exists a SAGP $wgu\rev{u}\rev{w}$
such that $u\rev{u}$ is the maximal palindrome
centered at position $i$,
and is said to be of \emph{type-2} otherwise.
For instance, consider string
$T = \mathtt{baaabaabaacbaabaabac}$ of length $20$.
Position $13$ of $T$ is of type-1,
since there are canonical longest SAGPs 
$(13, 4, 4, 2) = \mathtt{abaa | \underline{baac} | ba | ab | aaba}$ and
$(13, 4, 1, 2) = \mathtt{abaa | \underline{c} | ba | ab | aaba}$
for pivot $13$,
where $\mathtt{ba | ab}$ is the maximal palindrome
centered at position $13$.
On the other hand, Position $6$ of $T$ is of type-2;
the maximal palindrome centered at position $6$
is $\mathtt{aaba | abaa}$ but there are no SAGPs
in the form of $wg \mathtt{aaba | abaa} \rev{w}$ for pivot $6$.
The canonical longest SAGPs for pivot $6$ is
$(6, 1, 1, 3) = \mathtt{a | \underline{a} | aba | aba | a}$.

For an input string $T$ of length $n$ over an integer alphabet of size
$\sigma = n^{O(1)}$,
we perform standard preprocessing which replaces
all characters in $T$ with integers from range $[1, n]$.
Namely, we radix sort the original characters in $T$,
and replace each original character by its rank in the sorted order.
Since the original integer alphabet is of size $n^{O(1)}$,
the radix sort can be implemented with $O(1)$ number of bucket sorts,
taking $O(n)$ total time. Thus, whenever we speak of a string $T$
over an integer alphabet of size $n^{O(1)}$, one can regard $T$ as
a string over an integer alphabet of size $n$.

\vspace*{1ex}
\noindent \textbf{Tools:}
Suppose a string $T$ ends with a unique character that
does not appear elsewhere in $T$.
The \emph{suffix tree}~\cite{Weiner73} of a string $T$,
denoted by $\suftree{T}$, is a
path-compressed trie which represents all suffixes of $T$.
Then, $\suftree{T}$ can be defined as an edge-labeled rooted tree
such that
(1) Every internal node is branching;
(2) The out-going edges of every internal node begin with
    mutually distinct characters;
(3) Each edge is labeled by a non-empty substring of $T$;        
(4) For each suffix $s$ of $T$, there is a unique leaf
    such that the path from the root to the leaf spells out $s$.
It follows from the above definition of $\suftree{T}$ that
if $n = |T|$ then the number of nodes and edges in $\suftree{T}$ is $O(n)$.
By representing every edge label $X$ by a pair $(i, j)$ of integers
such that $X = T[i..j]$, $\suftree{T}$ can be represented with $O(n)$ space.
For a given string $T$ of length $n$
over an integer alphabet of size $\sigma = n^{O(1)}$,
$\suftree{T}$ can be constructed in $O(n)$ time~\cite{Farach-ColtonFM00}.
For each node $v$ in $\suftree{T}$,
let $\Str(v)$ denote the string spelled out from the root to $v$.
According to Property (4),
we sometimes identify each position $i$ in string $T$
with the leaf which represents the corresponding suffix $T[i..n]$.

Suppose the unique character at the end of string $T$
is the lexicographically smallest in $\Sigma$.
The \emph{suffix array}~\cite{ManberM93} of string $T$ of length $n$, 
denoted $\sa_T$, is an array of size $n$ such that 
$\sa_T[i] = j$ iff $T[j..n]$ is the $i$th lexicographically smallest
suffix of $T$ for $1 \leq i \leq n$.
The \emph{reversed suffix array} of $T$,
denoted $\rsa_T$, is an array of size $n$ such that
$\rsa_T[\sa_T[i]] = i$ for $1 \leq i \leq n$.
The \emph{longest common prefix array} of $T$, denoted $\lcp_T$,
is an array of size $n$ such that
$\lcp_T[1] = -1$ and $\lcp_T[i] = \lcplen(T[\sa_T[i-1]..n], T[\sa_T[i]..n])$
for $2 \leq i \leq n$.
The arrays $\sa_T$, $\rsa_T$, and $\lcp_T$ for 
a given string $T$ of length $n$ over an integer alphabet of size $\sigma = n^{O(1)}$
can be constructed in $O(n)$ time~\cite{KarkkainenSB06,KasaiLAAP01}.

For a rooted tree $\mathcal{T}$, the lowest common ancestor $\lca{\mathcal{T}}{u}{v}$ of two nodes $u$ and $v$ 
in $\mathcal{T}$ is the deepest node in $\mathcal{T}$ which has $u$ and $v$ as its descendants.
It is known that after a linear-time preprocessing on the input tree,
querying $\lca{\mathcal{T}}{u}{v}$ for any two nodes $u,v$ can be answered 
in constant time~\cite{bender2000lca}.

Consider a rooted tree $\mathcal{T}$
where each node is either marked or unmarked.
For any node $v$ in $\mathcal{T}$,
let $\nma{\mathcal{T}}{v}$ denote the deepest marked ancestor of $v$.
There exists a linear-space algorithm which marks any unmarked node and 
returns $\nma{\mathcal{T}}{v}$ for any node $v$ in amortized $O(1)$ time~\cite{westbrook1992fast}.

Let $A$ be an integer array of size $n$.
A range minimum query $\rmq{A}{i}{j}$ of a given pair $(i,j)$ of
indices~($1 \leq i \leq j \leq n$) asks an
index $k$ in range $[i, j]$ which stores the minimum value in $A[i..j]$.
After $O(n)$-time preprocessing on $A$,
$\rmq{A}{i}{j}$ can be answered in $O(1)$ time
for any given pair $(i, j)$ of indices~\cite{bender2000lca}.

Let $S$ be a set of $m$ integers from universe $[1,n]$,
where $n$ fits in a single machine word.
A predecessor (resp. successor) query for
a given integer $x$ to $S$ 
asks the largest (resp. smallest) value in $S$ 
that is smaller (resp. larger) than $x$.
Let $u(m,n)$, $q(m,n)$ and $s(m,n)$ denote 
the time for updates (insertion/deletion) of elements,
the time for predecessor/successor queries,
and the space of a dynamic predecessor/successor data structure.
Using a standard balanced binary search tree,
we have $u(m,n) = q(m,n) = O(\log m)$ time
and $s(n,m) = O(m)$ space.
The Y-fast trie~\cite{Willard83} achieves 
$u(m,n) = q(m,n) = O(\log \log m)$ \emph{expected} time
and $s(n,m) = O(m)$ space,
while the van Emde Boas tree~\cite{Boas75} does 
$u(m,n) = q(m,n) = O(\log \log m)$ \emph{worst-case} time
and $s(n,m) = O(n)$ space.

\section{Algorithms for computing canonical longest SAGPs}

In this section, we present
several algorithms to compute
the set $\LongestSAGP(T)$ of canonical longest SAGPs
for all pivots in a given string $T$.

Let $\Pos_1(T)$ and $\Pos_2(T)$ be the sets of type-1 and type-2 positions in $T$,
respectively.
Let $\LongestSAGP(T, i)$
be the subset of $\LongestSAGP(T)$
whose elements are canonical longest SAGPs for pivot $i$.
Let $\LongestSAGPone(T) = \bigcup_{i \in \Pos_1(T)} \LongestSAGP(T, i)$
and $\LongestSAGPtwo(T) = \bigcup_{i \in \Pos_2(T)} \LongestSAGP(T, i)$.
Clearly $\LongestSAGPone(T) \cup \LongestSAGPtwo(T) = \LongestSAGP(T)$
and $\LongestSAGPone(T) \cap \LongestSAGPtwo(T) = \emptyset$.
The following lemma gives an useful property to
characterize the type-1 positions of $T$.
\begin{lemma} \label{lem:longest_maximal_SAGP}
	Let $i$ be any type-1 position of a string $T$ of length $n$.
	Then, a SAGP $wgu\rev{u}\rev{w}$ is a canonical longest SAGP for pivot $i$
	iff $u\rev{u}$ is the maximal palindrome
	centered at $i$ and $\rev{w}$ is the longest non-empty prefix of
	$T[i+|\rev{u}|+1..n]$ such that $w$ occurs at least once in $T[1..i-|u|-1]$.
\end{lemma}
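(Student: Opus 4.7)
\emph{Proof proposal.} I would approach the equivalence by reducing both directions to a single quantitative bound on arm lengths at pivot $i$. Write $m$ for the half-length of the maximal palindrome centered at $i$, so that $u_0 := T[i-m+1..i]$ satisfies $T[i+1..i+m] = \rev{u_0}$. For each $k \in \{1,\ldots,m\}$, let $P_k$ denote the maximum length of a string $w$ such that $\rev{w}$ is a prefix of $T[i+k+1..n]$ and $w$ occurs in $T[1..i-k-1]$ (with $P_k = 0$ if no such $w$ exists). Any SAGP at pivot $i$ with $|u| = k$ has arm length at most $k + P_k$, and the hypothesis that $i$ is of type-1 is equivalent to $P_m \geq 1$.

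The plan is to prove the key inequality $P_k \leq P_m + (m - k)$ for every $k \in \{1,\ldots,m\}$. Granting this, $k + P_k \leq m + P_m$, so the longest arm length at pivot $i$ equals $m + P_m$ and is realised by $|u| = m$. In any canonical longest SAGP, $|u|$ must be maximised among longest SAGPs, which forces $|u| = m$, i.e.\ $u\rev{u}$ is the maximal palindrome at $i$; the residual $|w| = P_m$ then forces $\rev{w}$ to be the prefix of $T[i+m+1..n]$ of length $P_m$, which is exactly the longest non-empty prefix of $T[i+m+1..n]$ whose reverse occurs in $T[1..i-m-1]$. The converse direction follows by the same counting: any SAGP meeting the two stated conditions has arm length $m + P_m$ (the maximum) and $|u| = m$ (the absolute palindromic cap at $i$), and is therefore canonical.

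The main obstacle is the key inequality. To prove it, I would take a witness $w_k$ of length $P_k$, so that $\rev{w_k} = T[i+k+1..i+k+P_k]$ and $w_k$ occurs at some position $b$ with $b \leq i-k-P_k$; set $w' := w_k[1..P_k-(m-k)]$, the case $P_k \leq m-k$ making the inequality trivial. Since $w'$ is a prefix of $w_k$, its reverse $\rev{w'}$ is the suffix of $\rev{w_k}$ of length $|w'|$, which a direct index computation identifies with $T[i+m+1..i+k+P_k]$, i.e.\ a prefix of $T[i+m+1..n]$ of the desired length; meanwhile $w'$ still occurs at position $b$, and substituting the bounds on $|w'|$ and $b$ yields $b + |w'| - 1 \leq i - m - 1$. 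Hence $w'$ is a valid witness for level $m$, establishing $P_m \geq P_k - (m-k)$. This index-shift bookkeeping is the only nontrivial step; the rest of the proof is unpacking definitions.
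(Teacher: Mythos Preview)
Your argument is correct, and it packages the lemma more cleanly than the paper does. The paper treats the two implications separately, by contradiction in each direction: for $(\Rightarrow)$ it assumes $u\rev{u}$ is not maximal, enlarges the palindrome to $xu\rev{u}\rev{x}$, and carves out a SAGP with strictly larger central palindrome and equal arm length, contradicting canonicity; for $(\Leftarrow)$ it assumes a competing SAGP $w'g'u'\rev{u'}\rev{w'}$ with $|u'|<|u|$ and $|w'u'|>|wu|$, locates the reflected copy of $w$ inside the occurrence of $w'$, and reads off an extension of $\rev{w}$ that contradicts its maximality. Your route is to distill both arguments into the single inequality $P_k \le P_m + (m-k)$, from which both directions drop out by counting. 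The index-shift you perform to exhibit the level-$m$ witness $w'=w_k[1..P_k-(m-k)]$ is exactly the same reflection step the paper uses in its $(\Leftarrow)$ argument, so the content is the same; what you gain is that the $(\Rightarrow)$ direction no longer needs a separate contradiction, since the inequality already certifies that $|u|=m$ simultaneously maximises arm length and palindrome length. The paper's presentation makes the geometric picture (Figures~\ref{fig:longest_SAGP_necessary} and~\ref{fig:longest_SAGP_sufficient}) more explicit, which may aid intuition, but your formulation via $P_k$ is tighter and avoids the slightly delicate character-mismatch reasoning in the paper's sufficiency proof.
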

\begin{proof}
	($\Rightarrow$)
	Assume on the contrary that $u\rev{u}$ is not the maximal
	palindrome centered at $i$,
	and let $xu\rev{u}\rev{x}$ be the maximal palindrome centered
	at position $i$ with $|x| \geq 1$.
	If $\rev{w} = \rev{x}$,
	then since position $i$ is of type-1,
	there must be a SAGP $w'g'xu\rev{u}\rev{x}\rev{w'}$
	with $|w'| \geq 1$ for pivot $i$,
	but this contradicts that $wgu\rev{u}\rev{w}$ is a longest SAGP for pivot $i$.
	Hence $\rev{x}$ is a proper prefix of $\rev{w}$.
	See Figure~\ref{fig:longest_SAGP_necessary}.
	Let $\rev{x}\rev{w''} = \rev{w}$.
	Since $\rev{w''}$ is a non-empty suffix of $\rev{w}$,
	$w''$ is a non-empty prefix of $w$.
	This implies that there exists a SAGP
	$w''g''xu \rev{u}\rev{x} \rev{w''}$ for pivot $i$.
	However, this contradicts that $wgu\rev{u}\rev{w}$ is
	a canonical longest SAGP for pivot $i$.
	Consequently, $u\rev{u}$ is the maximal palindrome centered at $i$,
	and now it immediately follows that $\rev{w}$ is the 
	longest non-empty prefix of $T[i+|\rev{u}|+1..n]$ such that
	$w$ occurs at least once in $T[1..i-|u|-1]$.
	
	\begin{figure}[t]
		\centerline{
			\includegraphics[scale=0.33, clip]{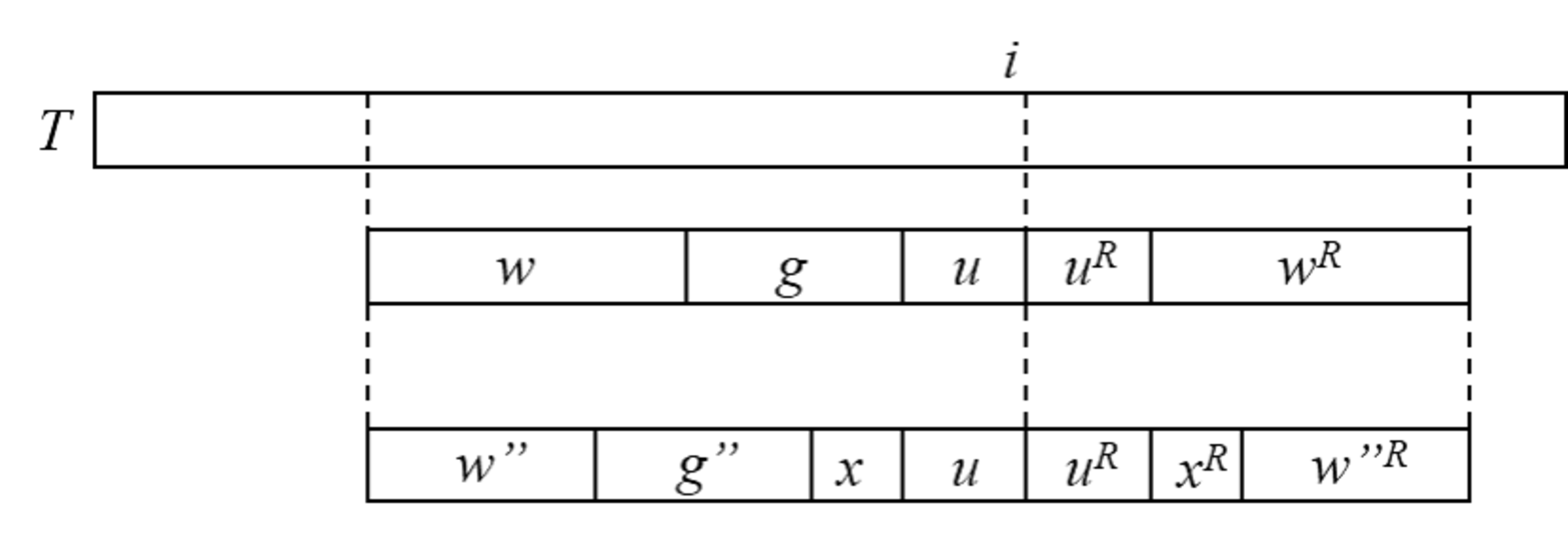}
		}
		\caption{
			Illustration for a necessary condition for a canonical longest SAGP
			(proof of ($\Rightarrow$) for Lemma~\ref{lem:longest_maximal_SAGP}):
			$wgu\rev{u}\rev{w}$ is a canonical longest SAGP
			for pivot $i$.
			For the same pivot $i$,
			there cannot exist a SAGP $w''g''xu\rev{u}\rev{x}\rev{w''}$
			where $xu\rev{u}\rev{x}$ is the maximal palindrome centered at $i$
			and $w''$ is a prefix of $w$,
			since it contradicts that
			$wgu\rev{u}\rev{w}$ is a canonical longest SAGP for $i$.}
		\label{fig:longest_SAGP_necessary}
	\end{figure}
	
	($\Leftarrow$)
	First, we show that $wgu\rev{u}\rev{w}$ is a longest
	SAGP for pivot $i$.
	See Figure~\ref{fig:longest_SAGP_sufficient}.
	Let $u'$ be any proper suffix of $u$,
	and assume on the contrary that there exists
	a SAGP $w'g'u'\rev{u'}\rev{w'}$ for pivot $i$
	such that $|w'u'| > |wu|$.
	Since $|u'| < |u|$, the occurrence of $\rev{w}$ at position $i+|\rev{u}|$
	is completely contained in the occurrence of $\rev{w'}$
	at position $i+|\rev{u'}|$.
	This implies that any occurrence of $w'$ to the left of $u'\rev{u'}$
	completely contains an occurrence of $w$,
	reflected from the occurrence of $\rev{w}$ in $\rev{w'}$.
	However, 
	the character $a$ that immediately precedes the occurrence of $w$ in $w'$
	must be distinct from the character $b$ that immediately follows $\rev{w}$,
	namely $a \neq b$.
	This contradicts that $w'g'u'\rev{u'}\rev{w'}$ is a SAGP for pivot $i$.
	Hence, $wgu\rev{u}\rev{w}$ is a longest SAGP for pivot $i$.
	Since $u\rev{u}$ is the maximal palindrome centered at $i$,
	we cannot extend $u$ to its left nor $\rev{u}$ to its right
	for the same center $i$.
	Thus, $wgu\rev{u}\rev{w}$ is a canonical longest SAGP for pivot $i$.
\end{proof}

\begin{figure}[t]
	\centerline{
		\includegraphics[scale=0.33, clip]{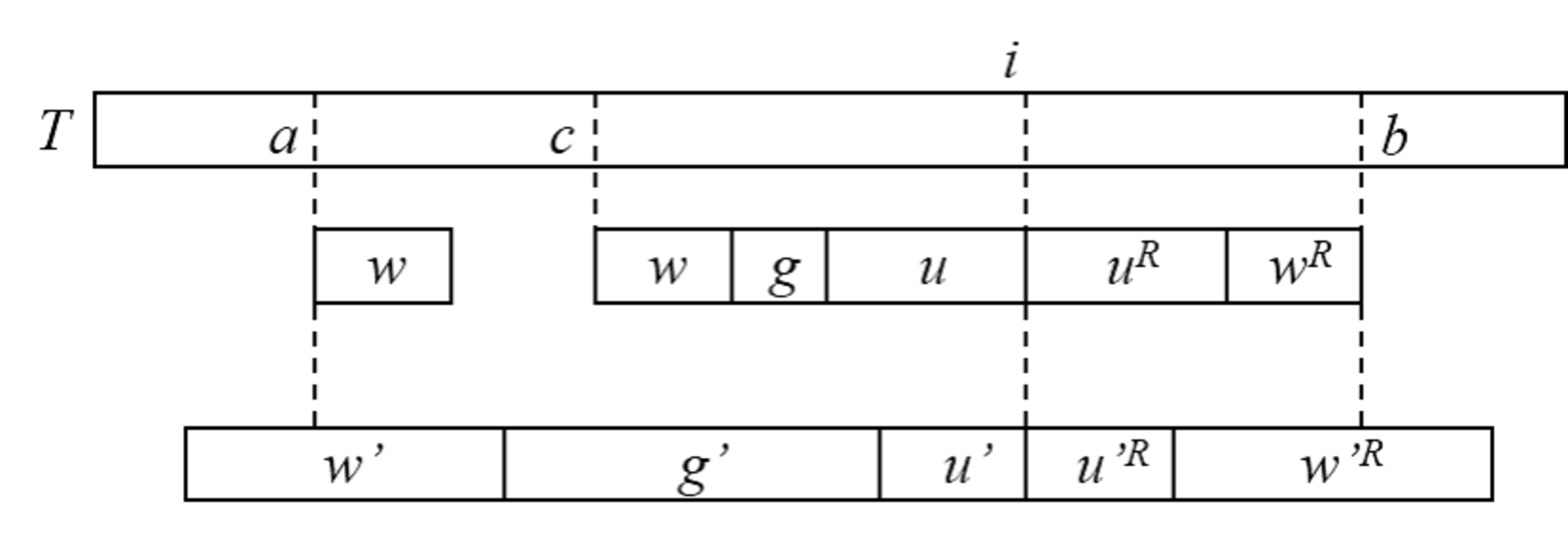}
	}
	\caption{
		Illustration for a sufficient condition for a canonical longest SAGP
		(proof of ($\Leftarrow$) for Lemma~\ref{lem:longest_maximal_SAGP}):
		$u\rev{u}$ is the maximal palindrome centered at $i$
		and $\rev{w}$ is the longest prefix of $T[i+|\rev{u}\rev{w}|+1..n]$
		such that $w$ occurs at least once in $T[1..i-|u|-1]$,
		and thus $c \neq b$.
		Then, there cannot exist a longer SAGP $w'g'u'\rev{u'}\rev{w'}$
		for the same pivot $i$, since $a \neq b$.}
	\label{fig:longest_SAGP_sufficient}
\end{figure}

 We define two arrays $\Pals$ and $\LeftMost$ as follows:
\begin{align*}
 \Pals[i] &= \{r \mid T[i-r+1..i+r] \mbox{ is a maximal palindrome in }   
 T \mbox{ for pivot } i \}. \\
  \LeftMost[c] &= \min\{i \mid T[i]=c \} \mbox{ for } c \in \Sigma.
\end{align*}
By Lemma~\ref{lem:longest_maximal_SAGP}, a position $i$ in $T$ is
of type-1 iff $\LeftMost[T[i+\Pals[i]+1]] < i-\Pals[i]$.

\begin{lemma} \label{lem:determine_types}
	Given a string $T$ of length $n$ over an integer alphabet
	of size $n^{O(1)}$,
	we can determine whether each position $i$ of $T$ is of type-1
	or type-2 in a total of $O(n)$ time and space.
\end{lemma}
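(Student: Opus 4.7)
The plan is to exploit the characterization stated in the paragraph immediately before the lemma: position $i$ is of type-1 iff $\LeftMost[T[i+\Pals[i]+1]] < i-\Pals[i]$, with the convention that the condition fails whenever the indices fall outside $[1,n]$. Thus it suffices to precompute the arrays $\Pals$ and $\LeftMost$ in linear time and then do $O(1)$ work per position.

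First, I would compute the array $\Pals$ using Manacher's classical linear-time maximal-palindrome algorithm~\cite{Manacher75}. Because $u\rev{u}$ always has even length, only the even-length maximal palindromes are relevant, and Manacher's algorithm yields exactly the value $\Pals[i]$ for each pivot $i$ in $O(n)$ total time and $O(n)$ space.

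Second, I would compute $\LeftMost$. By the alphabet-reduction preprocessing described in the Preliminaries, $\Sigma$ may be identified with a subset of $[1,n]$ in $O(n)$ time, so I can allocate an array of size $n$ initialised to a sentinel value $n+1$ and make one left-to-right scan of $T$, assigning $\LeftMost[T[j]] \leftarrow j$ the first time each character is encountered. This step takes $O(n)$ time and $O(n)$ space.

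Finally, I would scan $i=1,\ldots,n$ and, for each $i$, check whether the boundary conditions $\Pals[i] \geq 1$, $i+\Pals[i]+1 \leq n$, and $i-\Pals[i] \geq 2$ all hold; if so, I set $c = T[i+\Pals[i]+1]$ and declare $i$ to be type-1 iff $\LeftMost[c] < i-\Pals[i]$, and type-2 otherwise. Each such test is $O(1)$, so the classification phase runs in $O(n)$ total time, yielding the claimed $O(n)$ time and space bound. The only potential obstacle is the careful handling of the boundary cases where the maximal palindrome reaches an end of $T$ (leaving no room for a non-empty arm $w$ or $\rev{w}$); these positions are trivially type-2 and are detected by the conditions above.
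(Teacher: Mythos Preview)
Your proposal is correct and follows essentially the same route as the paper: compute $\Pals$ via Manacher, compute $\LeftMost$ after reducing the alphabet to $[1,n]$, and test the inequality $\LeftMost[T[i+\Pals[i]+1]] < i-\Pals[i]$ in $O(1)$ time per position. The only cosmetic difference is that the paper sorts the characters by building $\sa_T$, whereas you rely on the radix-sort preprocessing already described in the Preliminaries; both give $O(n)$ time and space.
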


\begin{proof}
	Let $u\rev{u}$ be the maximal palindrome centered at $i$.
	Observe that, by Lemma~\ref{lem:longest_maximal_SAGP},
	$i$ is a type-1 position iff
	the character $a = T[i+|\rev{u}|+1]$ which immediately
	follows $\rev{u}$ occurs in $T[1..i-|u|-1]$.
	Let $\Sigma_T$ be the set of distinct characters occurring in $T$.
	We construct an array $\LeftMost$ of size $|\Sigma_T|$
	such that for each $1 \leq j \leq |\Sigma_T|$,
	$\LeftMost[j]$ stores the leftmost occurrence of the lexicographically
	$j$th character in $T$.
	Using the above observation and the array $\LeftMost$,
	we can determine in $O(1)$ time whether 
	a given position $i$ of $T$ is of type-1 or type-2
	by $\LeftMost[T[i+\Pals[i]+1]] < i-\Pals[i]$.
	We can sort the characters in $\Sigma_T$ in $O(n)$
	by constructing $\sa_T$ in $O(n)$ time and space.
\end{proof}

By Lemma~\ref{lem:longest_maximal_SAGP} and Lemma~\ref{lem:determine_types},
we can consider an algorithm to compute $\LongestSAGP(T)$
by computing $\LongestSAGPone(T)$ and $\LongestSAGPtwo(T)$ separately,
as shown in Algorithm~\ref{alg:compute_SAGP}.
In this algorithm, we also construct an auxiliary array $\NextPos$ defined by
	\( \NextPos[i] = \min \{ j \mid i < j, \ T[i] = T[j] \}\) for each $1 \leq i \leq n$,
	which will be used in Section~\ref{sec:type-2}.
	
\begin{lemma} \label{lem:correctness_alg:compute_SAGP}
	Algorithm~\ref{alg:compute_SAGP} correctly computes $\LongestSAGP(T)$.
\end{lemma}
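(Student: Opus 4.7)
The plan is to verify correctness by decomposing Algorithm~\ref{alg:compute_SAGP} into its natural stages and invoking the lemmas already in place. Since the pivot set $\{1,\dots,n\}$ splits as $\Pos_1(T)\sqcup\Pos_2(T)$, we have by construction $\LongestSAGP(T) = \LongestSAGPone(T) \sqcup \LongestSAGPtwo(T)$; hence correctness reduces to three sub-claims: each pivot is given the right type, the type-1 branch outputs exactly $\LongestSAGPone(T)$, and the type-2 branch outputs exactly $\LongestSAGPtwo(T)$.

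The first sub-claim is immediate from Lemma~\ref{lem:determine_types}: the test $\LeftMost[i+\Pals[i]+1] < i-\Pals[i]$ stated just before that lemma is exactly the one used by the algorithm. For the type-1 branch I would invoke Lemma~\ref{lem:longest_maximal_SAGP}: at any $i \in \Pos_1(T)$ it forces $|u|=\Pals[i]$ and characterizes $\rev{w}$ uniquely as the longest non-empty prefix of $T[i+|u|+1..n]$ whose reversal occurs in $T[1..i-|u|-1]$. It therefore suffices to check that the type-1 subroutine, with this forced choice of $u\rev{u}$, enumerates every starting position of this unique longest $w$ in $T[1..i-|u|-1]$, which is precisely what the subsequent sections are devoted to. The auxiliary arrays $\Pals$, $\LeftMost$, and $\NextPos$ are built by standard linear passes (Manacher's algorithm, a left-to-right scan, and a right-to-left scan maintaining a last-occurrence table), so their correctness is routine.

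For the type-2 branch, the algorithm delegates to the procedure developed in Section~\ref{sec:type-2}, whose correctness I would cite as a forward reference; the $\NextPos$ array built here is exactly what that procedure consumes. Taking the union of the two outputs then yields $\LongestSAGP(T)$ by the disjoint decomposition above.

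The obstacle that cannot be dispatched by Lemma~\ref{lem:longest_maximal_SAGP} is the type-2 case: there $u\rev{u}$ is by definition \emph{not} the maximal palindrome centered at $i$, so $|u|$ cannot be read off $\Pals[i]$. One must instead argue that, over all palindromes $u\rev{u}$ centered at a type-2 pivot $i$ admitting some non-empty $w$ occurring in $T[1..i-|u|-1]$, the Section~\ref{sec:type-2} procedure selects those maximizing $|wu|$ and enumerates every canonical longest SAGP at $i$. Once that forward result is granted, the disjoint union closes the argument.
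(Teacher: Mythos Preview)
Your proposal is correct and follows essentially the same route as the paper's own proof: verify the auxiliary arrays ($\Pals$ via Manacher, $\LeftMost$ and $\NextPos$ via a linear scan), invoke Lemma~\ref{lem:determine_types} for the type classification, and then forward-reference the subroutines of Sections~\ref{sec:type-1} and~\ref{sec:type-2} for the two halves of the disjoint union $\LongestSAGP(T)=\LongestSAGPone(T)\sqcup\LongestSAGPtwo(T)$. Your write-up is in fact more explicit than the paper's (which is very terse here); one small slip is that $\LeftMost$ is computed in the same right-to-left pass as $\NextPos$, not a separate left-to-right scan, but this does not affect the argument.
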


\begin{proof}	
	In line~\ref{alg:compute_SAGP:line:pals}, we firstly compute an array $\Pals$.
	$\Pals[i]$ stores radius $r$ of maximal palindrome centered at $i$.
	We can compute $\Pals$ in $O(n)$ time and space 
	applying Manacher's algorithm~\cite{Manacher75}.
	We show how to compute $\Pals$ in Algorithm~\ref{alg:compute_Pals}.
	In the first {\bf for}-loop, we construct auxiliary arrays
	$\LeftMost$ and $\NextPos$.
	The correctness of the computation of these arrays is obvious.
	We use $\NextPos$ when computing $\LongestSAGPtwo$.
	In line~\ref{alg:compute_SAGP:line:if},
	since we correctly determine which each position of $T$ is of type-1 or type-2
	by Lemma~\ref{lem:determine_types},
	we must compute $\Pos_1(T)$ and $\Pos_2(T)$ in the second {\bf for}-loop.
	Therefore, by referring each element of $\Pos_1(T)$ and $\Pos_2(T)$ respectively,
	we can compute $\LongestSAGPone(T)$ and $\LongestSAGPtwo(T)$,
	namely $\LongestSAGP(T)$.
\end{proof}	

 \begin{algorithm2e}[t]
 	\caption{computing $\LongestSAGP(T)$}
 	\label{alg:compute_SAGP}
 	\KwIn{string $T$ of length $n$}\KwOut{$\LongestSAGP(T)$}
 	compute $\Pals$;  \qquad\qquad\qquad\quad \textit{/* Algorithm~\ref{alg:compute_Pals} */} \label{alg:compute_SAGP:line:pals}\\
 	\For{$i=n$ {\bf downto} $1$}{
 		$c = T[i]$;
 		$\NextPos[i] = \LeftMost[c]$;
 		$\LeftMost[c] = i$\;
 	}
 	\For{$i=1$ {\bf to} $n$}{
 		\uIf{$\LeftMost[T[i+\Pals[i]+1]] < i-\Pals[i]$}
 		{ $\Pos_1(T) = \Pos_1(T) \cup \{ i\}$; \qquad  \textit{/* position $i$ is of type-1  */} }  \label{alg:compute_SAGP:line:if}
 		\Else{$\Pos_2(T) = \Pos_2(T) \cup \{ i\}$; \qquad  \textit{/* position $i$ is of type-2  */}   }
 	}
 	 	compute $\LongestSAGPone(T)$; \qquad\qquad \textit{/* Section~\ref{sec:type-1} */} 	\label{alg:compute_SAGP:line:computeOne} \\
 	 	compute $\LongestSAGPtwo(T)$; \qquad\qquad \textit{/* Section~\ref{sec:type-2} */ }\label{alg:compute_SAGP:line:computeTwo}
 \end{algorithm2e}

\begin{algorithm2e}[t]
	\caption{computing $\Pals$  \qquad \textit{/* proposed by Manacher~\cite{Manacher75} */}}
	\label{alg:compute_Pals}
	\KwIn{string $T$ of length $n$}
	\KwOut{$\Pals$ \textit{/*$\Pals[i]$: maximal even palindrome for pivot $i$*/}}
	$\Pals[0] = 0$\;
	$i = 2$; \ $c = 1$; \ $r = 0$\;
	\While{$c \leq n$}{
		$j = 2*c-j+1$\;
		\While{$T[i]=T[j]$}{
			$i=i+1$; \ $j=j-1$; \ $r=r+1$\;
		}
		$\Pals[c] = r$\;
		$d = 1$\;
		\While{$d \leq r$}{
			$r_l = \Pals[c-d]$\;
			\lIf{$r_l =r-d$}{\textbf{break}} 
			$\Pals[c+d] = \min\{r-d, \ r_l\}$\;
			$d = d+1$\;
		}
		\lIf{$d>r$}{$i = i+1$; \ $r = 0$} 
		\lElse{$r = r_l$}
		$c = c + d$\;
	}
\end{algorithm2e}

In the following subsections, we present algorithms to compute
$\LongestSAGPone(T)$ and $\LongestSAGPtwo(T)$ respectively,
assuming that the arrays $\Pals$, $\LeftMost$ and $\NextPos$ have already been computed.

\subsection{Computing $\LongestSAGPone(T)$ for type-1 positions} \label{sec:type-1}

In what follows, we present several algorithms 
corresponding to the line~\ref{alg:compute_SAGP:line:computeOne} in Algorithm~\ref{alg:compute_SAGP}.
Lemma~\ref{lem:longest_maximal_SAGP} allows us greedy strategies 
to compute the longest prefix $\rev{w}$ of $T[i+\Pals[i]+1..n]$
such that $w$ occurs in $T[1..i-\Pals[i]-1]$.

\subsubsection{Na\"ive quadratic-time algorithm with RMQs}

Let $T' = T\$\rev{T}\#$.
We construct the suffix array $\sa_{T'}$,
the reversed suffix array $\rsa_{T'}$,
and the LCP array $\lcp_{T'}$ for $T'$.

For each $\Pals[i]$ in $T$,
for each gap size $G = 1, \ldots, i-\Pals[i]-1$,
we compute $W = \lcplen(\rev{T[1..i-\Pals[i]-G]}, T[i+\Pals[i]+1..n])$
in $O(1)$ time by an RMQ on the LCP array $\lcp_{T'}$.
Then, the gap sizes $G$ with largest values of $W$ give
all longest SAGPs for pivot $i$.
Since we test $O(n)$ gap sizes for every pivot $i$,
it takes a total of $O(n^2)$ time to compute $\LongestSAGPone(T)$.
The working space of this method is $O(n)$.

\subsubsection{Simple quadratic-time algorithm based on suffix array}
Given a string $T$,
we construct $\sa_{T'}$, $\rsa_{T'}$, and $\lcp_{T'}$
for string $T' = T\$\rev{T}\#$ as in the previous subsection.
Further, for each position $n+2 \leq j \leq 2n+1$ 
in the reversed part $\rev{T}$ of $T' = T\$\rev{T}\#$,
let $\op(j)$ denote its ``original'' position in the string $T$,
namely, let $\op(j) = 2n-j+2$.
Let $e$ be any entry of $\sa_{T'}$ such that
$\sa_{T'}[e] \geq n+2$.
We associate each such entry of $\sa_{T'}$ with $\op(\sa_{T'}[e])$.

Let $\sa_{T'}[k] = i+\Pals[i]+1$, namely,
the $k$-th entry of $\sa_{T'}$ corresponds to the suffix
$T[i+\Pals[i]+1..n]$ of $T$.
Now, the task is to find the longest prefix $\rev{w}$ of
$T[i+\Pals[i]+1..n]$ such that $w$ occurs completely inside $T[1..i-\Pals[i]-1]$.
Let $b = i-\Pals[i]+1$, namely, $b$ is the beginning position of
the maximal palindrome $u\rev{u}$ centered at $i$.
We can find $w$ for any maximal SAGP $wgu\rev{u}\rev{w}$ for pivot $i$
by traversing $\sa_{T'}$ from the $k$-th entry forward and backward,
until we encounter the nearest entries $p < k$ and $q > k$
on $\sa_{T'}$ such that $\op(\sa_{T'}[p]) < b-1$ and $\op(\sa_{T'}[q]) < b-1$,
if they exist.
The size $W$ of $w$ is equal to
\begin{equation}
\max\{\min\{\lcp_{T'}[p+1], \ldots, \lcp_{T'}[k]\}, \min\{\lcp_{T'}[k+1], \ldots, \lcp_{T'}[q]\}\}.
\end{equation}
Assume w.l.o.g. that $p$ gives a larger lcp value with $k$,
i.e. $W = \min\{\lcp_{T'}[p+1], \ldots, \lcp_{T'}[k]\}$.
Let $s$ be the largest entry of $\sa_{T'}$ such that
$s < p$ and $\lcp_{T'}[s+1] < W$.
Then, any entry $t$ of $\sa_{T'}$
such that $s < t \leq p$ and $\op(\sa_{T'}[t]) < b-1$
corresponds to an occurrence of a longest SAGP for pivot $i$,
with gap size $b - \op(\sa_{T'}[t])-1$.
We output longest SAGP $(i, W, b - \op(\sa_{T'}[t])-1, |u|)$
for each such $t$.
The case where $q$ gives a larger lcp value with $k$,
or $p$ and $q$ give the same lcp values with $k$ can be
treated similarly.

We find $p$ and $s$ by simply traversing $\sa_{T'}$ from $k$.
Since the distance from $k$ to $s$ is $O(n)$,
the above algorithm takes $O(n^2)$ time.
The working space is $O(n)$.

\subsubsection{Algorithm based on suffix array and predecessor/successor queries}
Let $\occ_1 = |\LongestSAGPone(T)|$.
For any position $r$ in $T$,
we say that the entry $j$ of $\sa_{T'}$ is \emph{active} w.r.t. $r$
iff $\op(\sa_{T'}[j]) < r-1$.
Let $\Active(r)$ denote the set of active entries of $\sa_{T'}$
for position $r$, namely,
$\Active(r) = \{j \mid \op(\sa_{T'}[j]) < r-1\}$.

Let $t_1 = p$, and let $t_2, \ldots, t_{h}$ be the decreasing
sequence of entries of $\sa_{T'}$ which correspond to the
occurrences of longest SAGPs for pivot $i$.
Notice that for all $1 \leq \ell \leq h$
we have $\op(\sa_{T'}[t_\ell]) < b-1$ and hence
$t_{\ell} \in \Active(b)$, where $b = i-|u|+1$.
Then, finding $t_1$ reduces to a predecessor query 
for $k$ in $\Active(b)$.
Also, finding $t_{\ell}$ for $2 \leq \ell \leq h$ reduces
to a predecessor query for $t_{\ell-1}$ in $\Active(b)$.

To effectively use the above observation,
we compute an array $U$ of size $n$ from $Pals$ such that
$U[b]$ stores a list of all maximal palindromes in $T$
which begin at position $b$ if they exist,
and $U[b]$ is nil otherwise.
$U$ can be computed in $O(n)$ time e.g., by bucket sort.
After computing $U$,
we process $b = 1, \ldots, n$ in increasing order.
Assume that when we process a certain value of $b$,
we have maintained a dynamic predecessor/successor query data
structure for $\Active(b)$.
The key is that 
the same set $\Active(b)$ can be used to compute the longest SAGPs
for every element in $U[b]$,
and hence we can use the same predecessor/successor data structure
for all of them.
After processing all elements in $U[b]$,
we insert all elements of $\Active(b+1) \setminus \Active(b)$
to the predecessor/successor data structure.
Each element to insert can be easily found in constant time.

Since we perform $O(n + \occ_1)$ predecessor/successor queries
and $O(n)$ insertion operations in total,
we obtain the following theorem.

\begin{theorem} \label{theo:predecessor_algorithm}
  Given a string $T$ of size $n$ over
  an integer alphabet of size $\sigma = n^{O(1)}$,
  we can compute $\LongestSAGPone(T)$ 
  in $O(n(u(n,n)+q(n,n)) + \occ_1 \cdot q(n,n))$ time 
  with $O(n + s(n,n))$ space
  by using the suffix array and a predecessor/successor data structure,
  where $\occ_1 = |\LongestSAGPone(T)|$.
\end{theorem}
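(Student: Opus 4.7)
The plan is to formalize the algorithm sketched in the paragraphs preceding the theorem and verify its time and space bounds. I would begin with the one-time preprocessing: build $\sa_{T'}$, $\rsa_{T'}$, $\lcp_{T'}$ for $T' = T\$\rev{T}\#$ in $O(n)$ time; equip $\lcp_{T'}$ with a constant-time RMQ structure; and bucket sort the maximal palindromes recorded in $\Pals$ into an array $U$, so that $U[b]$ lists every type-1 pivot $i$ whose maximal palindrome begins at position $b = i - \Pals[i] + 1$. All of this fits in $O(n)$ time and $O(n)$ space.

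Next I would process $b = 1, 2, \ldots, n$ in increasing order while maintaining a dynamic predecessor/successor structure holding the current $\Active(b)$. As $b$ grows, the defining condition $\op(\sa_{T'}[j]) < b - 1$ only becomes easier to satisfy, so updating $\Active(b)$ to $\Active(b+1)$ requires inserting at most one entry (the unique index $j$ with $\op(\sa_{T'}[j]) = b - 1$, located in $O(1)$ time through $\rsa_{T'}$), giving at most $n$ insertions in total at cost $u(n,n)$ each. For each type-1 pivot $i \in U[b]$ I would set $k = \rsa_{T'}[i + \Pals[i] + 1]$, issue a predecessor query for $k$ and a successor query for $k$ in $\Active(b)$ to obtain candidates $p$ and $q$, compute $W$ by two RMQs on $\lcp_{T'}$, and then enumerate the remaining longest SAGPs by iteratively issuing predecessor (respectively successor) queries from $p$ (respectively $q$), each time using an RMQ to test whether the LCP with the suffix at $k$ is still at least $W$; the walk on each side halts as soon as the RMQ drops below $W$. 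Every surviving entry $t$ is reported as the canonical longest SAGP $(i, W, b - \op(\sa_{T'}[t]) - 1, \Pals[i])$.

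Correctness follows from Lemma~\ref{lem:longest_maximal_SAGP}: the canonical longest SAGPs for a type-1 pivot $i$ are in bijection with the active entries of $\sa_{T'}$ that attain the maximum LCP value $W$ with entry $k$, and this set is exactly the contiguous block of entries around $k$ whose LCP with $k$ is at least $W$ intersected with $\Active(b)$; the bidirectional walk visits precisely the active members of this block and no other active entries. For the complexity, each type-1 pivot incurs $O(1)$ queries to determine $p, q, W$, one extra certifying query per side, and one query per reported output, summing to $O(n + \occ_1)$ predecessor/successor queries at cost $q(n,n)$ each. Combining with the $O(n)$ insertions and the $O(n)$-size auxiliary arrays yields the stated $O(n(u(n,n) + q(n,n)) + \occ_1 \cdot q(n,n))$ time and $O(n + s(n,n))$ space bounds.

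The step I expect to need the most attention is the enumeration argument: one must verify both that no longest SAGP for pivot $i$ is skipped (so that the LCP-block around $k$ is swept fully by the combined walks) and that each side halts after at most one wasted query beyond the last output, so that the amortized cost stays $O(q(n,n))$ per reported SAGP. Handling the cases where the two sides attain the same LCP with $k$, and where $\sa_{T'}[k] \pm 1$ already lies outside $\Active(b)$, are secondary boundary details that must be treated carefully so that each element of $\LongestSAGPone(T)$ is reported exactly once.
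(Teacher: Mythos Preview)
Your proposal is correct and follows essentially the same approach as the paper: the paper's argument for this theorem is precisely the two paragraphs preceding it, which describe maintaining $\Active(b)$ in a dynamic predecessor/successor structure while sweeping $b$ from left to right, issuing predecessor/successor queries from $k$ to locate $p$ and $q$, and then iterating predecessor queries to enumerate the remaining outputs. Your write-up is in fact slightly more explicit than the paper's sketch---you spell out the RMQ-based stopping test and the two-sided walk for the case $W_p = W_q$, both of which the paper leaves implicit---but the underlying algorithm and the accounting ($O(n)$ insertions, $O(n+\occ_1)$ queries) are identical.
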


Since every element of $\Active(b)$ for any $b$ is in range $[1,2n+2]$,
we can employ the van Emde Boas tree~\cite{Boas75} as
the dynamic predecessor/successor data structure
using $O(n)$ total space.
Thus we obtain the following theorem.
\begin{theorem}
  Given a string $T$ of size $n$ over
  an integer alphabet of size $\sigma = n^{O(1)}$,
  we can compute $\LongestSAGPone(T)$ 
  in $O((n+\occ_1) \log \log n)$ time and $O(n)$ space
  by using the suffix array and the van Emde Boas tree,
  where $\occ_1 = |\LongestSAGPone(T)|$.
\end{theorem}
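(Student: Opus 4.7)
The plan is to instantiate Theorem~\ref{theo:predecessor_algorithm} with the van Emde Boas tree as the dynamic predecessor/successor data structure. The first step is to verify that the universe of the structure is small enough to admit the van Emde Boas bounds. Every element that is ever inserted into the structure is an index of the suffix array $\sa_{T'}$ of $T' = T\$\rev{T}\#$, which has length $2n+2$, so all such elements lie in $[1,2n+2]$. Thus the universe is of size $O(n)$, which is exactly the setting in which the van Emde Boas tree of~\cite{Boas75} supports insertions and predecessor/successor queries in $O(\log\log n)$ worst-case time using $O(n)$ space.

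Given this, the proof reduces to a routine substitution. I would plug $u(n,n) = q(n,n) = O(\log\log n)$ and $s(n,n) = O(n)$ into the bounds of Theorem~\ref{theo:predecessor_algorithm}, obtaining total time $O\bigl(n(\log\log n + \log\log n) + \occ_1 \cdot \log\log n\bigr) = O((n+\occ_1)\log\log n)$ and total space $O(n + n) = O(n)$, as claimed. The underlying preprocessing (the suffix array $\sa_{T'}$, the array $\Pals$, the bucketed array $U$, and the initial computation of $\Active(n)$) all fit within $O(n)$ time and space by the tools recalled in the preliminaries, and hence do not affect the stated bound.

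There is essentially no obstacle beyond the bookkeeping already carried out for Theorem~\ref{theo:predecessor_algorithm}. The only subtlety worth highlighting is that the van Emde Boas tree gives \emph{worst-case} bounds, as opposed to the expected bounds of, e.g., the Y-fast trie, so the resulting time bound is likewise worst-case and no randomization is introduced.
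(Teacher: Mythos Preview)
Your proposal is correct and follows essentially the same approach as the paper: the paper observes that every element of $\Active(b)$ lies in the range $[1,2n+2]$, so the van Emde Boas tree can be used with $O(n)$ space, and then the theorem follows immediately from Theorem~\ref{theo:predecessor_algorithm} by substituting $u(n,n)=q(n,n)=O(\log\log n)$ and $s(n,n)=O(n)$. Your write-up includes slightly more detail (the preprocessing costs, the worst-case versus expected distinction), but the argument is the same.
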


\begin{figure}[t]
	\centerline{
		\includegraphics[scale=0.65]{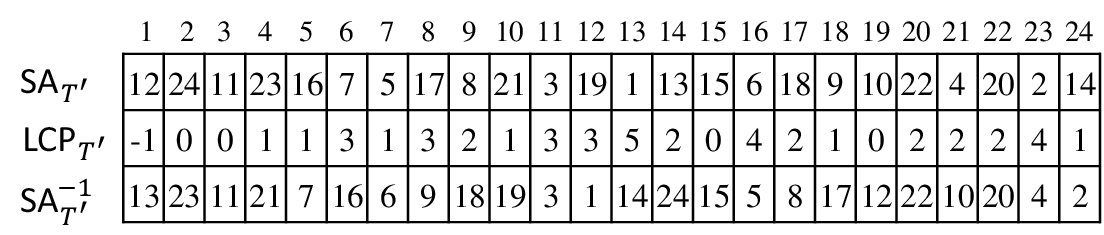}
	}
	\caption{$\sa_{T'}$, $\lcp_{T'}$ and $\rsa_{T'}$
		for string $T' = \mathtt{acacabaabca\$acbaabacaca\#}$.}
	\label{fig:sagp_array}
\end{figure}

\begin{example}
Let $T=\mathtt{acacabaabca}$ and $T' = \mathtt{acacabaabca}\$\mathtt{acbaabacaca}\#$, where $T'= T\$ \rev{T} \#$.
First, we compute $\Pals$ and the array $U$.
Assume we are now processing position $b=6$ in $T$, then $U[6] = \{(6, 9)\}$,
where $(6, 9)$ represents the maximal palindrome $T[6..9] = \mathtt{baab}$.
Thus we consider pivot $i = b + \lceil (9-6+1)/2 \rceil -1 = 7$.

First, we construct the suffix array $\sa_{T'}$,
the reversed suffix array $\rsa_{T'}$,
and the LCP array $\lcp_{T'}$ for $T'$.
Figure~\ref{fig:sagp_array} shows these arrays. 
Let $k$ be the integer such that $\sa_{T'}[k] = i + \lceil (9-6+1)/2 \rceil + 1 = 
7 + 3 = 10$, namely $k = 19$.
This can be obtained from $\rsa[10] = 19$ (see Figure~\ref{fig:sagp_array}).
To compute the longest $w$, we traverse $\sa_{T'}[19]$ forward and backward,
until we encounter the nearest entries $p < k$ and $q > k$ on $\sa_{T'}$
such that $\op(\sa_{T'}[p]) < 5$ and $\op(\sa_{T'}[q]) < 5$.
Note that these are equivalent to predecessor/successor queries for $19$,
respectively.
Then, we can find $p=10$ and $q=20$. 
Then, the size $W$ of $w$ is computed by
\[
W = \max\{\min\{\lcp_{T'}[11], \ldots, \lcp_{T'}[19]\}, \min\{\lcp_{T'}[20], \ldots, \lcp_{T'}[20]\}\},
\]
and we obtain $W=2$.
In this case, $q=20$ gives a larger lcp value with $k = 19$.
Thus, we output a canonical longest SAGP $(7, 2, 3, 2) = \mathtt{ac | \underline{aca} | ba | ab | ca}$.
We further traverse $\sa_{T'}$ from the $20$th entry backward as long as successive entries $s$ 
fulfill $\lcp_{T'}[s+1] \geq W$. 
Then, we find $s=22$, thus we output a canonical longest SAGP $(7, 2, 1, 2) = \mathtt{ac | \underline{a} | ba | ab | ca}$.
We further traverse $\sa_{T'}$ from the $17$th entry backward,
finally we reach the $24$th entry of $\sa_{T'}$,
which is the last entry of the suffix array.
Therefore, we finish the process for position $b = 6$. 
\end{example}

\subsubsection{Optimal-time algorithm based on suffix tree}

In this subsection, we show that
the problem can be solved in \emph{optimal} time and space,
using the following three suffix trees regarding the input string $T$.
Let $\mathcal{T}_1 = \suftree{T\$\rev{T}\#}$ for string $T\$\rev{T}\#$
of length $2n+2$,
and $\mathcal{T}_2 = \suftree{\rev{T}\#}$ of length $n+1$.
These suffix trees $\mathcal{T}_1$ and $\mathcal{T}_2$ are static, 
and thus can be constructed offline, in $O(n)$ time for an integer alphabet.
We also maintain a growing suffix tree 
$\mathcal{T}^\prime_2 = \suftree{\rev{T}[k..n])\#}$
for decreasing $k = n, \ldots, 1$.

\begin{lemma} \label{lem:growing_suffix_tree_linear_time}
  Given $\mathcal{T}_2 = \suftree{\rev{T}\#}$,
  we can maintain $\mathcal{T}^\prime_2 = \suftree{\rev{T}[k..n]\#}$
  for decreasing $k = n, \ldots, 1$ incrementally,
  in $O(n)$ total time for an integer alphabet of size $n^{O(1)}$.
\end{lemma}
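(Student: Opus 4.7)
The plan is to view $\mathcal{T}^\prime_2 = \suftree{\rev{T}[k..n]\#}$ as the subtree of $\mathcal{T}_2 = \suftree{\rev{T}\#}$ induced by the leaves corresponding to the suffixes starting at positions $\geq k$ (together with the $\#$ leaf): an internal node of $\mathcal{T}_2$ appears in $\mathcal{T}^\prime_2$ iff at least two of its children have descendant leaves in this set, and every edge of $\mathcal{T}^\prime_2$ corresponds to a possibly contracted path in $\mathcal{T}_2$. Decreasing $k$ by one then amounts to inserting a single new leaf $L_k$ into $\mathcal{T}^\prime_2$, namely the leaf of $\mathcal{T}_2$ that represents the suffix $\rev{T}[k..n]\#$. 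As preprocessing I would equip $\mathcal{T}_2$ with a constant-time LCA structure and with Westbrook's amortized $O(1)$ nearest marked ancestor (NMA) structure, declaring a node of $\mathcal{T}_2$ \emph{marked} precisely when it currently belongs to $\mathcal{T}^\prime_2$; at $k = n$ only the root of $\mathcal{T}_2$ together with the two leaves for $\rev{T}[n]\#$ and $\#$ are present.

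To insert $L_k$ I would first query $v^* = \nma{\mathcal{T}_2}{L_k}$, the lowest marked ancestor of $L_k$, and then examine the character $c = \rev{T}\#[k+|\Str(v^*)|]$ by which the new suffix extends past $\Str(v^*)$. Using a per-node dictionary on $\mathcal{T}^\prime_2$ I check whether $v^*$ already has a $\mathcal{T}^\prime_2$-child whose outgoing edge starts with $c$. If not, $L_k$ is attached to $v^*$ as a fresh child; otherwise, letting $c^*$ be that child, I compute $v_k = \lca{\mathcal{T}_2}{L_k}{c^*}$, split the edge $(v^*,c^*)$ of $\mathcal{T}^\prime_2$ at $v_k$, attach $L_k$ as a new child of $v_k$, and mark $v_k$. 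Correctness hinges on $v_k$ being the deepest LCA in $\mathcal{T}_2$ between $L_k$ and any leaf currently in $\mathcal{T}^\prime_2$: since $v^*$ is the NMA of $L_k$, the marked node $c^*$ is not an ancestor of $L_k$, so for every current leaf $L_j$ in the $\mathcal{T}_2$-subtree rooted at $c^*$ one has $\lca{\mathcal{T}_2}{L_k}{L_j} = \lca{\mathcal{T}_2}{L_k}{c^*} = v_k$, while leaves lying outside $v^*$'s subtree contribute strictly shallower LCAs.

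The running time would then sum to $O(n)$: there are at most $n$ insertions, each performing $O(1)$ LCA, dictionary and edge-split work, together with NMA queries and node markings whose amortized cost is $O(1)$ each by Westbrook's analysis. The main obstacle I expect is realising the dictionary of $\mathcal{T}^\prime_2$-children at each node in $O(1)$ amortized time for an integer alphabet of size $n^{O(1)}$. I would sidestep this by piggybacking on the $\mathcal{T}_2$-child dictionaries, which are already built in $O(n)$ time by the linear-time suffix tree construction: to test whether $v^*$ has a $\mathcal{T}^\prime_2$-child starting with character $c$, first look up the unique $\mathcal{T}_2$-child $u$ of $v^*$ with that starting character (or conclude that none exists), and then follow a ``nearest marked descendant in the subtree of $u$'' pointer that is maintained in $O(1)$ per update as new branching nodes are marked.
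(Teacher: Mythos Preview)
Your high-level picture is sound, but the route differs substantially from the paper's, and your last paragraph leaves a real gap.

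The paper avoids NMA and child dictionaries entirely. It precomputes $\sa$ and $\rsa$ for $\rev{T}\#$ together with arrays $\PLV$ and $\NLV$, where $\PLV[j]$ (resp.\ $\NLV[j]$) is the largest index $j'<j$ (resp.\ smallest $j'>j$) with $\sa[j']>\sa[j]$. For the new position $k$, the leaves $\sa[\PLV[\rsa[k]]]$ and $\sa[\NLV[\rsa[k]]]$ are precisely the lexicographic left and right neighbours of $L_k$ among the leaves already present in $\mathcal{T}'_2$; one LCA query on the static tree $\mathcal{T}_2$ with each of them determines the insertion point, and comparing the two LCA depths tells you which incoming edge (if any) to split. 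Since the leaves are kept linked in suffix-array order, no per-node child dictionary is ever consulted or updated. The whole thing is $O(1)$ per step after $O(n)$ preprocessing for $\sa$, $\rsa$, $\PLV$, $\NLV$, and LCA on $\mathcal{T}_2$.

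In your approach the step ``look up the $\mathcal{T}_2$-child $u$ of $v^*$ with first character $c$, then follow a nearest-marked-descendant pointer'' is where the argument is incomplete. First, the linear-time suffix-tree construction for integer alphabets yields children in sorted order, not $O(1)$ random access by first character; you would need extra machinery (perfect hashing, or a level-ancestor query to locate the child of $v^*$ on the $v^*$--$L_k$ path, which always exists since $v^*$ is an ancestor of $L_k$). Second, and more seriously, you assert that the ``nearest marked descendant'' pointer is maintained in $O(1)$ per update, but when a new branching node $v_k$ is marked strictly between $v^*$ and $c^*$, the highest marked descendant changes for every $\mathcal{T}_2$-node on the $u$--$v_k$ path, and fresh pointers are needed below $v_k$; you give no mechanism for this. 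A workable repair is to store at each $\mathcal{T}_2$-node a pointer to \emph{some} already-inserted leaf in its subtree (set once, bottom-up, the first time such a leaf arrives; $O(n)$ total over all insertions) and then take $v_k=\lca{\mathcal{T}_2}{L_k}{L_j}$ for that witness leaf $L_j$---this yields the same $v_k$ because every current leaf below $u$ lies below $c^*$. As written, though, the $O(1)$-per-update claim is unjustified, and the paper's $\PLV/\NLV$ device sidesteps the whole issue.
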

\begin{figure}[t]
	\centerline{
		\includegraphics[scale=0.60, clip]{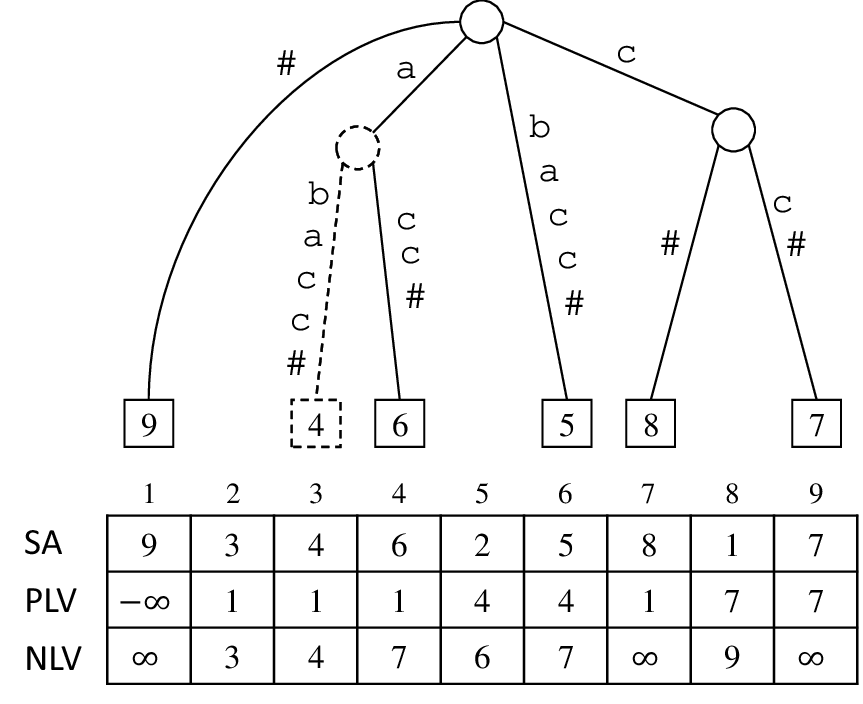}
	}
	\caption{
		Illustration for the proof of Lemma~\ref{lem:growing_suffix_tree_linear_time}.
		Consider string $T = \mathtt{ccabaabc}$.
		Here, we illustrate incremental construction of the growing suffix tree
		$\mathcal{T}'_2$ for its reversed string $\rev{T}\# = \mathtt{cbaabacc}\#$.
		Assume we have constructed
		$\mathcal{T}'_2 = \suftree{\rev{T}[5..8]\#} = \suftree{\mathtt{bacc}\#}$,
		and we are inserting a new leaf for the next suffix
		$\rev{T}[4..8]\# = \mathtt{abacc}\#$ starting at position
		$4$ in $\rev{T}\#$. $\sa^{-1}[4] = 3$, hence we focus on
		the suffixes of $\rev{T}\#$ starting at positions 
		$\sa[\PLV[3]] = \sa[1] = 9$ and $\sa[\NLV[3]] = \sa[4] = 6$.
		Since $\lcplen(\rev{T}[9..8]\#, \rev{T}[4..8]\#)
		= \lcplen(\#, \mathtt{abacc}\#) = 0
		< \lcplen(\rev{T}[4..8]\#, \rev{T}[6..8]\#)
		= \lcplen(\mathtt{abacc}\#, \mathtt{acc}\#) = 1$,
		we split the corresponding edge
		and insert the new leaf $\sa[3] = 4$ as the left neighbor of the leaf $6$.
	}
	
	\label{fig:increment_ST}
\end{figure}

\begin{proof}
        We build an LCA data structure on $\mathcal{T}_2 = \suftree{\rev{T}\#}$.
        We also construct a \emph{level ancestor} data structure~\cite{BenderF04}
        on $\mathcal{T}_2$ in linear preprocessing time and space,
        so that later, given a node $u$ and a positive integer $\ell$,
        the $\ell$th ancestor of $u$ in $\mathcal{T}_2$ can be answered in $O(1)$ time.
        Additionally, we use $\sa_{\rev{T}\#}$ and $\sa^{-1}_{\rev{T}\#}$ in our algorithm,
	and throughout this proof we abbreviate $\sa_{\rev{T}\#}$ as $\sa$
	and $\sa_{\rev{T}\#}^{-1}$ as $\sa^{-1}$ for simplicity.
	Let $\PLV$ and $\NLV$ be arrays of size $n+1$ each, such that
	for every $1 \leq j \leq n+1$,
	\begin{eqnarray*}
		\PLV[j] & = & \max(\{j' \mid 1 \leq j' < j, \sa[j'] > \sa[j]\} \cup \{- \infty\}), \\
		\NLV[j] & = & \min(\{j' \mid j < j' \leq n+1, \sa[j'] > \sa[j]\} \cup \{\infty\}).
	\end{eqnarray*}
	Intuitively, $\PLV[j]$ and $\NLV[j]$ indicate the entries of $\sa$
	that correspond to the lexicographically closest suffixes 
	to the left and to the right of the suffix $\rev{T}[\sa[j]..n]\#$
	which occur positions larger than $j$, respectively.
	If such entries do not exist, then let $\PLV[j] = -\infty$
	and $\NLV[j] = \infty$.
	See also Figure~\ref{fig:increment_ST} for
	concrete examples of $\PLV$ and $\NLV$ arrays.
	
	Suppose we have constructed $\mathcal{T}^\prime_2 = \suftree{\rev{T}[k+1..n]\#}$
	up to position $k+1$, 
	and we wish to update it with the new character $\rev{T}[k]$ at position $k$.
        We maintain an NMA data structure on the full suffix tree $\mathcal{T}_2$
        such that a node in $\mathcal{T}_2$ is marked iff  
        it exists in the growing suffix tree $\mathcal{T}^\prime_2$. 
        Namely, in each step of our algorithm,
        the induced tree with the marked nodes of $\mathcal{T}_2$
        coincides with $\mathcal{T}^\prime_2$. 
        
	Now, what is required here is to insert a new leaf 
	corresponding to the suffix $\rev{T}[k..n]\#$ to the growing suffix tree.
	If we use a variant of Weiner's algorithm~\cite{Weiner73},
	we can do this in $O(\log \sigma)$ (amortized) time,
	but this becomes $O(\log n)$ for integer alphabets of size $\sigma = n^{O(1)}$,
	and thus it is not enough for our goal.
	To achieve $O(1)$ update time per character, 
	we utilize the power of the full suffix tree 
	$\mathcal{T}_2 = \suftree{\rev{T}\#}$
	and four arrays $\sa$, $\sa^{-1}$, $\PLV$, and $\NLV$.
	
	In what follows,
	we focus on the case where $\PLV[\sa^{-1}[k]] \neq -\infty$
	and $\NLV[\sa^{-1}[k]] \neq \infty$.
	The case where $\PLV[\sa^{-1}[k]] = \infty$ or $\NLV[\sa^{-1}[k]]
	= \infty$
	is simpler and can be treated similarly.
	A key observation is that 
	there is a one-to-one correspondence between 
	every leaf of $\suftree{\rev{T}[k+1..n]\#}$
	and an entry of $\sa$ which stores 
	a position in $\rev{T}\#$ which is \emph{larger} than $k$.
	Hence, $\sa[\PLV[\sa^{-1}[k]]]$ and $\sa[\NLV[\sa^{-1}[k]]]$ will be,
	respectively,
	the left and right neighboring leaves of the new leaf $k$ 
	in the updated suffix tree $\suftree{\rev{T}[k..n]\#}$.
	
	Given the new position $k$, 
	we compute the following values $L$ and $R$:
	\begin{eqnarray*}
		L & = & \lcplen((\rev{T}\#)[\sa[\PLV[\sa^{-1}[k]]..n+1]], (\rev{T}\#)[k..n+1]), \\
		R & = & \lcplen((\rev{T}\#)[k..n+1], (\rev{T}\#)[\sa[\NLV[\sa^{-1}[k]]..n+1]]).
	\end{eqnarray*}
        Let $v_L$ and $v_R$ be the LCA nodes in the full suffix tree $\mathcal{T}_2$
        that correspond to the LCP values $L$ and $R$, respectively
        (i.e., the string depths of $v_L$ and $v_R$ are respectively $L$ and $R$).
        Both $v_L$ and $v_R$ can be found in $O(1)$ time 
        by LCA queries on the full suffix tree $\mathcal{T}_2$.
	Depending on the values of $L$ and $R$, we have the following cases.
	\begin{itemize}
		\item If $L \geq R$, then leaf $\sa[\PLV[\sa^{-1}[k]]]$ will be the left neighbor
		  of the new leaf $k$ in the updated suffix tree.
                  We then find the parent $\parent(v_L)$ of $v_L$ in the growing tree $\mathcal{T}_2^\prime$,
                  by an NMA query from $v_L$ on the full suffix tree $\mathcal{T}_2$.
                  Then, in the growing tree,
                  we split the corresponding out-going edge of $\parent(v_L)$ 
                  with the string depth of $v_L$ if necessary,
                  and insert a new leaf $k$.
                  In case where the parent of this leaf $k$ was newly created
                  after splitting the edge,
                  then we mark the corresponding node in the full suffix tree $\mathcal{T}_2$.

                  What remains is how to efficiently locate this corresponding out-going edge of $\parent(v_L)$ to be split.
                  For this sake, we associate each edge $f$ of $\mathcal{T}_2^\prime$
                  with the first edge of the path in $\mathcal{T}_2$
                  that corresponds to $f$.
                  We also precompute the depths of all nodes in the full suffix tree
                  $\mathcal{T}_2$ by a standard linear-time tree traversal.
                  Let $u$ be the (marked) child of $\parent(v_L)$ in the growing tree $\mathcal{T}_2^\prime$
                  such that
                  the edge from $\parent(v_L)$ to $u$ is split and $v_L$ is newly inserted in between.
                  Let $x$ be the label of the edge from $\parent(v_L)$ to $u$ in $\mathcal{T}_2^\prime$,
                  and let $e$ be the first edge of the corresponding path of $\mathcal{T}_2$
                  that spells out $x$.
                  Since we know the depth of $\parent(v_L)$ in the full suffix tree $\mathcal{T}_2$,
                  we can find $e$ in $O(1)$ time by a level ancestor query
                  from leaf $\sa[\PLV[\sa^{-1}[k]]]$ on the full suffix tree $\mathcal{T}_2$.
                  Thus we can locate the edge from $\parent(v_L)$ to $u$ using $e$, in constant time.
                  Let $x = yz$ be the partition of $x$ such that
                  the path from $\parent(v_L)$ to $v_L$ spells out $y$
                  and the path from $v_L$ to $u$ spells out $z$.
                  Let $e'$ be the out-going edge of $v_L$ in $\mathcal{T}_2$, that is the first edge in this path spelling out $z$.
                  After the update to the growing suffix tree, the new edge from $\parent(v_L)$ to $v_L$ is associated to $e$,
                  and the new edge from $v_L$ to $u$ is associated to $e'$.
                  The edge $e'$ can also be found by a level ancestor query from $\sa[\PLV[\sa^{-1}[k]]]$ on $\mathcal{T}_2$.
                  See Figure~\ref{fig:growing_tree} for illustration.
                  
		\item If $L < R$, then leaf $\sa[\NLV[\sa^{-1}[k]]]$ will be the right neighbor
		  of leaf $k$ in the updated suffix tree.
                  This case can be treated in a similar manner as the afore-mentioned case.
		
	\end{itemize}
	We then associate the new leaf $k$ with the $\sa^{-1}[k]$-th entry of $\sa$
	so that later, given $k$, we can access to this leaf on $\sa$
	in $O(1)$ time.
	See also Figure~\ref{fig:increment_ST} for a concrete example
	on how we insert a new leaf to the growing suffix tree.

	Let us analyze the efficiency of our algorithm.
	Given $\sa$, $\PLV$ and $\NLV$ can be constructed in 
	$O(n)$ time~\cite{CrochemoreIIKRW13}.
	Then, given a position $k$ in string $\rev{T}\#$, 
	we can access the leaves 
	$\sa[\PLV[\sa^{-1}[k]]]$ and $\sa[\NLV[\sa^{-1}[k]]]$ of the full suffix tree 
	$\mathcal{T}_2 = \suftree{\rev{T}\#}$
	in $O(1)$ time using $\sa$, $\sa^{-1}$, $\PLV$, and $\NLV$ arrays.
	The values of $L$ and $R$ can be computed in $O(1)$ time
	by two LCA queries on the full suffix tree $\mathcal{T}_2$.
	In each of the afore-mentioned cases,
        we perform at most two level ancestor queries on $\mathcal{T}_2$,
        using $O(1)$ time each.
	Thus it takes $O(1)$ time to insert a new leaf.
        Thus, it takes $O(1)$ time to insert a new leaf to 
	the growing suffix tree $\mathcal{T}'_2$.
	This completes the proof.
\end{proof}

\begin{figure}[t]
	\centerline{
	  \includegraphics[scale=0.4]{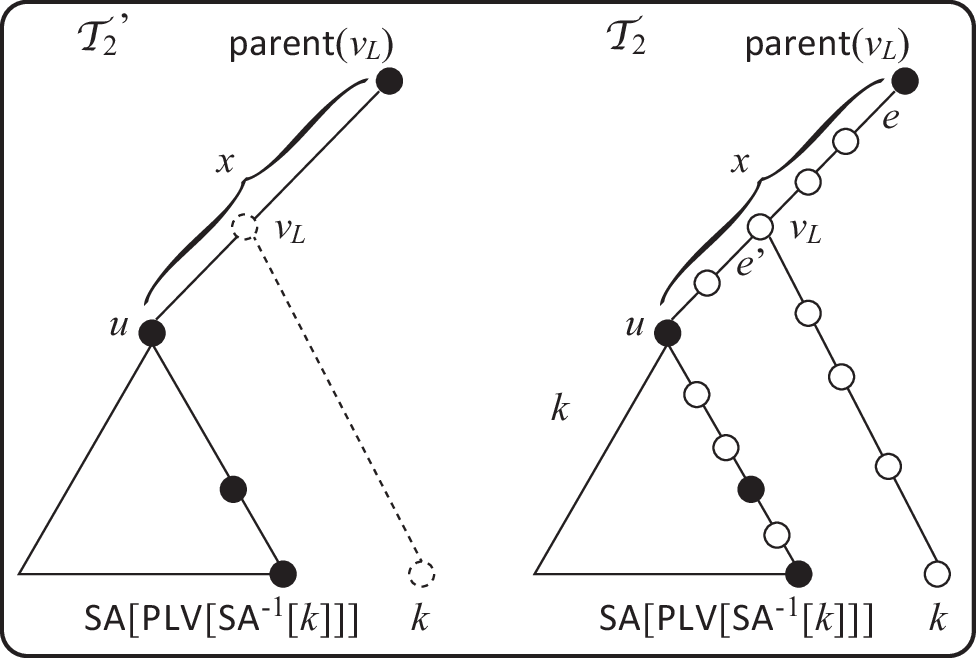}
          \hfill
  	  \includegraphics[scale=0.4]{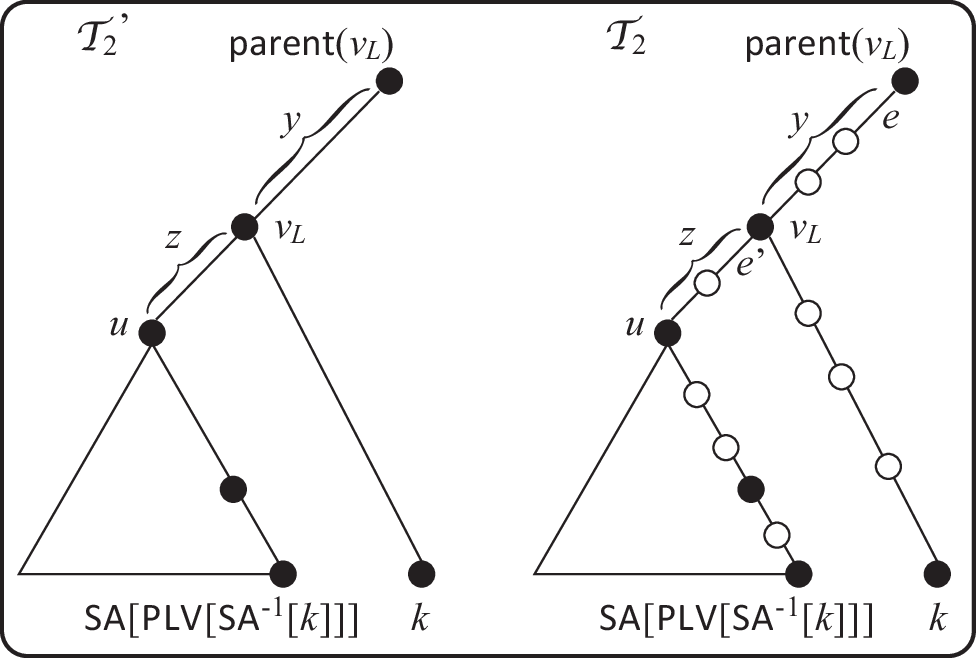}
	}
	\caption{Left: The growing suffix tree $\mathcal{T}_2^\prime$ and the full suffix tree $\mathcal{T}_2$ before the update. Black nodes are marked nodes, and white ones are unmarked nodes. We find $v_L$ by an LCA query from the leaves for $\sa[\PLV[\sa^{-1}[k]]]$ and $k$ on $\mathcal{T}_2$, and find $\parent(v_L)$ by an NMA query from $v_L$ on $\mathcal{T}_2$. We then find the edge $e$ on $\mathcal{T}_2$ by a level ancestor query from $\sa[\PLV[\sa^{-1}[k]]]$ on $\mathcal{T}_2$, and locate the edge from $\parent(v_L)$ to $u$ on the growing tree $\mathcal{T}_2^\prime$ using a link from $e$. Right: The growing suffix tree $\mathcal{T}_2^\prime$ and the full suffix tree $\mathcal{T}_2$ after the update. Now the edge from $\parent(v_L)$ to $v_L$ on $\mathcal{T}_2^\prime$ is associated to $e$, and the edge from $v_L$ to $u$ on $\mathcal{T}_2^\prime$ is associated to $e'$.}	
	\label{fig:growing_tree}
\end{figure}

We now obtain the main result for this subsection:
\begin{theorem} \label{theo:maximal_suffix_tree_linear}
  Given a string $T$ of length $n$ over
  an integer alphabet of size $\sigma = n^{O(1)}$,
  we can compute $\LongestSAGPone(T)$  
  in optimal $O(n+\occ_1)$ time and $O(n)$ space
  by using suffix trees,
  where $\occ_1 = |\LongestSAGPone(T)|$.
\end{theorem}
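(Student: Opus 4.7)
The plan is to replace the dynamic predecessor/successor structure of Theorem~\ref{theo:predecessor_algorithm} by an amortized-constant nearest-marked-ancestor (NMA) structure~\cite{westbrook1992fast} on $\mathcal{T}_1 = \suftree{T\$\rev{T}\#}$, and to use the growing suffix tree $\mathcal{T}'_2$ of Lemma~\ref{lem:growing_suffix_tree_linear_time} to enumerate the canonical longest SAGPs in time proportional to the output size. By Lemma~\ref{lem:longest_maximal_SAGP}, for a type-1 pivot $i$ whose maximal palindrome $u\rev{u}$ has starting position $b = i-|u|+1$, the arm length $|wu|$ of a canonical longest SAGP equals $|u|$ plus the string depth of the deepest ancestor in $\mathcal{T}_1$ of the leaf for $T[i+|u|+1..n]$ whose subtree contains an ``active'' leaf, where a $\rev{T}$-part leaf of $\mathcal{T}_1$ is deemed active iff the suffix of $\rev{T}$ it represents starts at position $\geq n-b+3$. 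As $b$ grows by one, exactly one new leaf becomes active.

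Processing pivots by increasing $b$, I activate each new leaf by climbing and marking every previously unmarked ancestor in $\mathcal{T}_1$, stopping at the first marked one. Each node of $\mathcal{T}_1$ is marked at most once, so the total marking cost is $O(n)$ under Westbrook's amortized bounds. For each pivot I then issue a single NMA query on the leaf for $T[i+|u|+1..n]$, obtaining in $O(1)$ amortized time the node $v \in \mathcal{T}_1$ whose string depth equals $|w|$. Note that the ``active'' condition is monotone in $b$, so no node ever needs to be unmarked and this fits Westbrook's incremental model exactly.

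For the enumeration of the $\occ_i$ canonical longest SAGPs at pivot $i$, I use $\mathcal{T}'_2 = \suftree{\rev{T}[n-b+3..n]\#}$, whose leaves are precisely the active positions: the subtree of the node $v' \in \mathcal{T}'_2$ whose $\Str$-value matches the first $|w|$ characters of $T[i+|u|+1..n]$ has one leaf per occurrence of $w$ in $T[1..b-2]$, and is traversable in time linear in its leaves. To reach $v'$ from $v$ in $O(1)$, I maintain a cross-tree correspondence from the loci of $\mathcal{T}_1$ (those representing substrings of $\rev{T}$) to their current loci in $\mathcal{T}'_2$. The main obstacle is keeping this correspondence synchronized with the Weiner-link-driven online construction underlying Lemma~\ref{lem:growing_suffix_tree_linear_time}, so that each update costs only $O(1)$ amortized and so that $v'$ is always reachable from $v$ in constant time (handling the case where $\Str(v)$ lands in the middle of an edge of $\mathcal{T}'_2$ by mapping to the node at its lower end). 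Once this bookkeeping is established, summing the $O(n)$ preprocessing, markings, and NMA queries with the $O(\occ_1)$ total enumeration yields the claimed $O(n+\occ_1)$-time, $O(n)$-space bound, which is optimal since reading $T$ takes $\Theta(n)$ time and the output has size $\Theta(\occ_1)$.
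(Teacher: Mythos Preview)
Your proposal is correct and follows essentially the same approach as the paper: process pivots by increasing $b$, use the NMA structure on $\mathcal{T}_1=\suftree{T\$\rev{T}\#}$ with monotone marking of the ``active'' $\rev{T}$-leaves to obtain $|w|$ in amortized $O(1)$ time, and then switch to the growing suffix tree $\mathcal{T}'_2$ of Lemma~\ref{lem:growing_suffix_tree_linear_time} to enumerate the occurrences of $w$ in $T[1..b-2]$ in output-linear time. You are in fact slightly more careful than the paper on two points---your index $\rev{T}[n-b+3..n]\#$ is the right one for occurrences ending at position $\le b-2$, and you explicitly flag that $\Str(v)$ may fall mid-edge in $\mathcal{T}'_2$---both of which the paper glosses over; the cross-tree link you identify as ``the main obstacle'' is exactly what the paper resolves (tersely) via an LCA query on $\mathcal{T}_1$ each time a new internal node is created in $\mathcal{T}'_2$.
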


\begin{proof}
  We first compute the array $U$.
  Consider an arbitrary fixed $b$,
  and let $u\rev{u}$ be a maximal palindrome stored in $U[b]$
  whose center is $i = b+|u|-1$.
  Assume that we have a growing suffix tree $\mathcal{T}'_2$
  for string $\rev{T}[n-b+1..n]\#$
  which corresponds to the prefix $T[1..b]$ of $T$ of size $b$.
  We use a similar strategy as the suffix array based algorithms.
  For each position $2n-b+2 \leq j \leq 2n+1$ in string $T' = T\$\rev{T}\#$,
  $1 \leq \op(j) \leq b-2$.
  We maintain the NMA data structure over the suffix tree $\mathcal{T}_1$
  for string $T'$ so that all the ancestors of
  the leaves whose corresponding suffixes start
  at positions $2n-b+2 \leq j \leq 2n+1$ are marked,
  and any other nodes in $\mathcal{T}_1$ remain unmarked at this step.

  As in the suffix-array based algorithms,
  the task is to find the longest prefix $\rev{w}$ of
  $T[i+|\rev{u}|+1..n]$ such that $w$ occurs completely inside 
  $T[1..b-2] = T[1..i-|u|-1]$.
  In so doing, we perform an NMA query from the leaf 
  $i+|\rev{u}|+1$ of $\mathcal{T}_1$,
  and let $v$ be the answer to the NMA query.
  By the way how we have maintained the NMA data structure, 
  it follows that $\Str(v) = \rev{w}$.

  To obtain the occurrences of $w$ in $T[1..b-2]$,
  we switch to $\mathcal{T}'_2$, and traverse the subtree rooted at $v$.
  Then, for any leaf $\ell$ in the subtree,
  $(i, |\Str(v)|, b-\op(\ell), |u|)$ is a canonical longest SAGP for pivot $i$
  (see also Figure~\ref{fig:sagp_ex_stree}).

  After processing all the maximal palindromes in $U[b]$,
  we mark all unmarked ancestors of the leaf $2n-b$ of $\mathcal{T}_1$
  in a bottom-up manner, until we encounter the lowest
  ancestor that is already marked.
  This operation is a preprocessing for the maximal palindromes in $U[b+1]$,
  as we will be interested in 
  the positions between $1$ and $\op(2n-b) = b-1$ in $T$.
  In this preprocessing, each unmarked node is marked at most once,
  and each marked node will remain marked.
  In addition, we update the growing suffix tree $\mathcal{T}'_2$
  by inserting the new leaf for $\rev{T}[n-b..n]\#$.
  
  We analyze the time complexity of this algorithm.
  Since all maximal palindromes in $U[b]$ begin at position $b$ in $T$,
  we can use the same set of marked nodes on $\mathcal{T}_1$ for all of 
  those in $U[b]$.
  Thus, the total cost to update the NMA data structure for all $b$'s
  is linear in the number of unmarked nodes that later become marked,
  which is $O(n)$ overall.
  The cost for traversing the subtree of $\mathcal{T}'_2$
  to find the occurrences of $w$ can be charged to
  the number of canonical longest SAGPs to output for each pivot,
  thus it takes $O(\occ_1)$ time for all pivots.
  Updating the growing suffix tree $\mathcal{T}'_2$ takes 
  overall $O(n)$ time by Lemma~\ref{lem:growing_suffix_tree_linear_time}.
  What remains is how to efficiently link the new internal node introduced
  in the growing suffix tree $\mathcal{T}'_2$, 
  to its corresponding node in the static suffix tree 
  $\mathcal{T}_1$ for string $T'$.
  This can be done in $O(1)$ time using a similar technique
  based on LCA queries on $\mathcal{T}_1$,
  as in the proof of Lemma~\ref{lem:growing_suffix_tree_linear_time}.
  Summing up all the above costs, we obtain 
  $O(n + \occ_1)$ optimal running time and $O(n)$ working space.
\end{proof}

\begin{example}
Let $T=\mathtt{acacabaabca}$ and $T' = \mathtt{acacabaabca}\$\mathtt{acbaabacaca}\#$, where $T'= T\$ \rev{T} \#$.
First, we compute $\Pals$ and the array $U$.
Assume we are now processing position $b=6$ in $T$, then $U[6] = \{(6, 9)\}$,
where $(6, 9)$ represents the maximal palindrome $T[6..9] = \mathtt{baab}$.
Thus we consider pivot $i = b + \lceil (9-6+1)/2 \rceil -1 = 7$.

First, we construct the suffix tree $\mathcal{T}_1 = \suftree{T\$\rev{T}\#}$.
Suppose that we have constructed $\mathcal{T}'_2=\suftree{\rev{T}[8..11]\#}$ and
marked all ancestors of every leaf $v$ such that $19 < v \leq 24$ in $\mathcal{T}_1$.
In Figure~\ref{fig:sagp_ex_stree}, 
we show interesting parts of $\mathcal{T}_1$ and $\mathcal{T}'_2$.

To compute the longest $w$, we perform an NMA query from the leaf 
$i+|\rev{u}| +1 = 10$ of $\mathcal{T}_1$.
As can be seen in Figure~\ref{fig:sagp_ex_stree}, 
we obtain the nearest marked node $v = \nma{\mathcal{T}_1}{10}$.
Thus, we know that $\rev{w} = \mathtt{ca}$.
Next, we switch from the node $v$ of $\mathcal{T}_1$ to its 
corresponding node $v'$ of $\mathcal{T}'_2$ using a link between them.
Then, we traverse the subtree rooted at $v'$ 
and obtain all occurrences of $\rev{w}$, namely
$\rev{w} = \rev{T}[10..11] = \rev{T}[8..9] = \mathtt{ca}$
at positions $10$ and $8$ in the reversed string $\rev{T}\#$.
Since $\op(10) = 2$ and $\op(8) = 4$,
we obtain the canonical longest SAGPs 
$(7, 2, 3, 2) = \mathtt{ac | \underline{aca} | ba | ab | ca}$,
and $(7, 2, 1, 2) = \mathtt{ac | \underline{a} | ba | ab | ca}$
for pivot $7$.

\begin{figure}[t!]
	\centerline{
		\includegraphics[scale=0.30,clip]{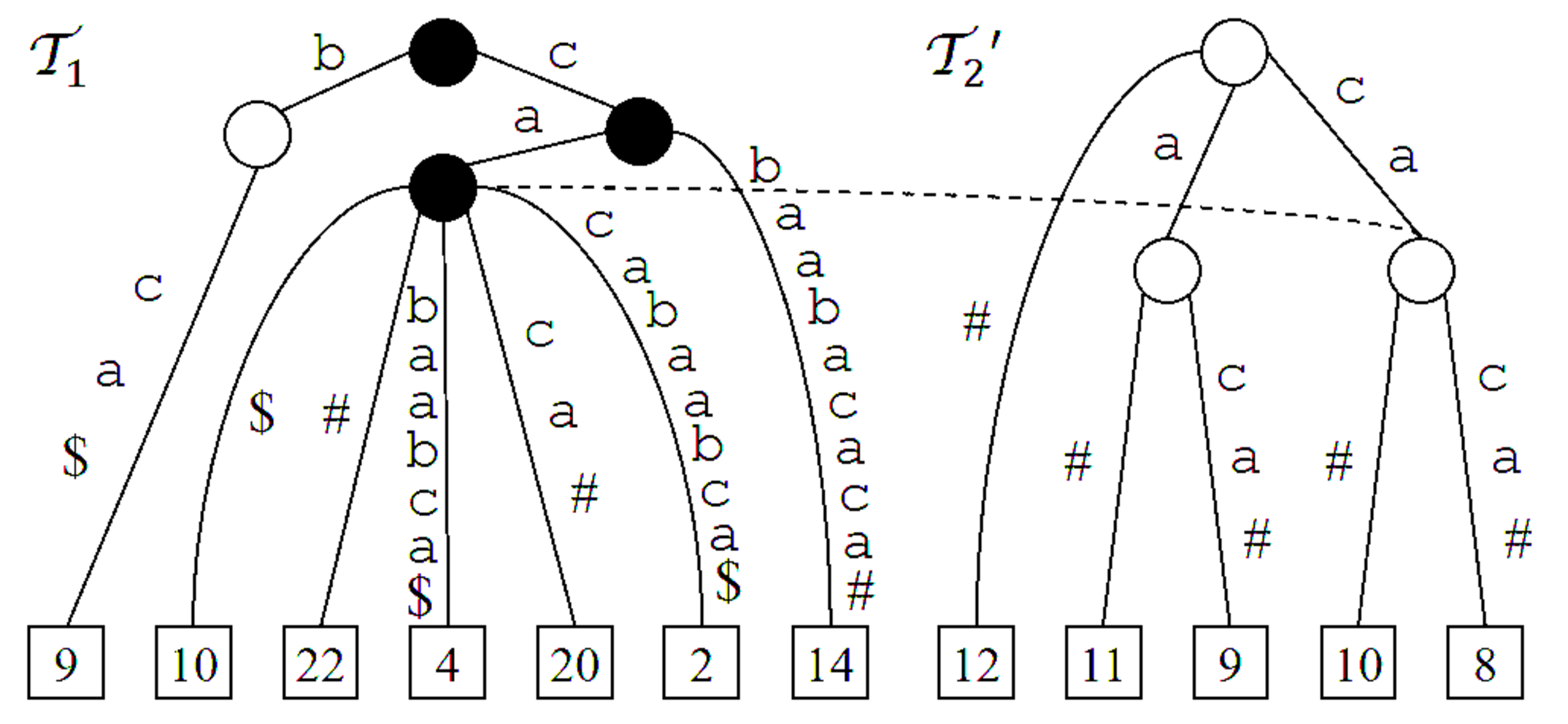}
	}  
	\caption{
		Showing interesting parts of $\mathcal{T}_1 = \suftree{T'}$
		and $\mathcal{T}'_2 = \suftree{\rev{T}[8..11]\#}$,
		where $T = \mathtt{acacabaabca}$, $\rev{T} = \mathtt{acbaabacaca}$ and    
		$T' = \mathtt{acacabaabca\$acbaabacaca\#}$.
		In $\mathcal{T}_1$, we represent the marked internal nodes by black circles,
		the unmarked internal nodes by white circles,
		and the leaves by squares in which the numbers denote
		the beginning positions of the corresponding suffixes in the string.
		The dotted line represents the link between 
		the node for string $\mathtt{c}$ in $\mathcal{T}_1$ and 
		that in $\mathcal{T}'_2$.
	}
	\label{fig:sagp_ex_stree}
\end{figure}

\end{example}

\subsection{Computing $\LongestSAGPtwo(T)$ for type-2 positions} \label{sec:type-2}

In this subsection,
we present an algorithm to compute $\LongestSAGPtwo(T)$ in a given string $T$,
corresponding to the line~\ref{alg:compute_SAGP:line:computeTwo} in Algorithm~\ref{alg:compute_SAGP}.

\begin{lemma} \label{lem:type-2_position}
  Every (not necessarily longest) 
  SAGP for type-2 pivot $i$ must end at one of the positions
  between $i+2$ and $i+\Pals[i]$.
\end{lemma}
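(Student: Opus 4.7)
The plan is to prove both bounds on the end position $e = i + |u| + |w|$ of an arbitrary SAGP $wgu\rev{u}\rev{w}$ for a type-2 pivot $i$ (the relevant case for Section~\ref{sec:type-2}). The lower bound $e \geq i+2$ is immediate from $|u| \geq 1$ and $|w| \geq 1$.

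For the upper bound $e \leq i + \Pals[i]$, equivalently $|u| + |w| \leq \Pals[i]$, the plan is to argue by contradiction: suppose $|u| + |w| > \Pals[i]$. Since $u\rev{u}$ is a length-$2|u|$ palindrome centered at $i$, we have $|u| \leq \Pals[i]$, so the assumption forces at least one character of $\rev{w}$ to lie strictly to the right of the maximal palindrome centered at $i$. A direct indexing calculation identifies this character: $\rev{w}[\Pals[i]-|u|+1]$ sits at position $i+\Pals[i]+1$ of $T$, and translating through $\rev{w}[j] = w[|w|-j+1]$ yields $w[|w|+|u|-\Pals[i]] = c$, where $c := T[i+\Pals[i]+1]$.

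The key step is to use the other occurrence of $w$ inside the SAGP, which starts at position $b = i - |u| - |g| - |w| + 1$. Substituting the index found above, the same character $c$ must also appear at position $b + (|w|+|u|-\Pals[i]) - 1 = i - |g| - \Pals[i]$ in $T$. Because $|g| \geq 1$, this position is at most $i - \Pals[i] - 1$, and it is at least $1$ thanks to $b \geq 1$. Hence $c$ occurs in $T[1..i-\Pals[i]-1]$, contradicting the type-2 characterization $\LeftMost[T[i+\Pals[i]+1]] \geq i - \Pals[i]$ that follows from Lemma~\ref{lem:longest_maximal_SAGP}.

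Two edge cases deserve a separate word: if $i + \Pals[i] = n$ then the character $c$ does not exist, but then $\rev{w}$ cannot extend beyond position $n$, forcing $|u|+|w| \leq \Pals[i]$ directly; if $i - \Pals[i] \leq 0$ the condition $b \geq 1$ already gives $|u|+|g|+|w| \leq i \leq \Pals[i]$, hence $|u|+|w| < \Pals[i]$. The main obstacle will be keeping the index arithmetic straight and verifying that the candidate position $i - |g| - \Pals[i]$ really lies in the range $[1, i - \Pals[i] - 1]$; once that is pinned down, the conflict with the type-2 condition follows in one line.
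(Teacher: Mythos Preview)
Your argument is correct. Both the lower bound and the contradiction for the upper bound go through as you describe, and the edge cases are handled (in fact, when $i-\Pals[i]\le 1$ your two bounds $1\le i-|g|-\Pals[i]\le i-\Pals[i]-1$ already clash, so those cases are absorbed by the main argument anyway).

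The route differs from the paper's. The paper argues structurally: assuming an SAGP $w'g'u'\rev{u'}\rev{w'}$ ends past $i+\Pals[i]$, it takes the suffix $\rev{w}$ of $\rev{w'}$ of length $|u'|+|w'|-\Pals[i]$ (the part overhanging the maximal palindrome) and observes that $wgu\rev{u}\rev{w}$, with $u\rev{u}$ the maximal palindrome and $|g|=|g'|$, is itself a valid SAGP for pivot $i$---directly contradicting the definition of a type-2 position. You instead isolate a single character: the one sitting at position $i+\Pals[i]+1$, trace its mirror image inside the left copy of $w$ to a position at most $i-\Pals[i]-1$, and invoke the $\LeftMost$ characterization of type-1 derived from Lemma~\ref{lem:longest_maximal_SAGP}. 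Your character is precisely the last character of the paper's constructed $w$ (equivalently, the first character of its $\rev{w}$), so the two arguments are close cousins. The paper's version is more conceptual---it exhibits the forbidden SAGP explicitly---while yours is more elementary, needing only one index computation rather than a substring construction; both reach the contradiction in essentially one step once the setup is in place.
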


\begin{figure}[!t]
	\centerline{
		\includegraphics[scale=0.33, clip]{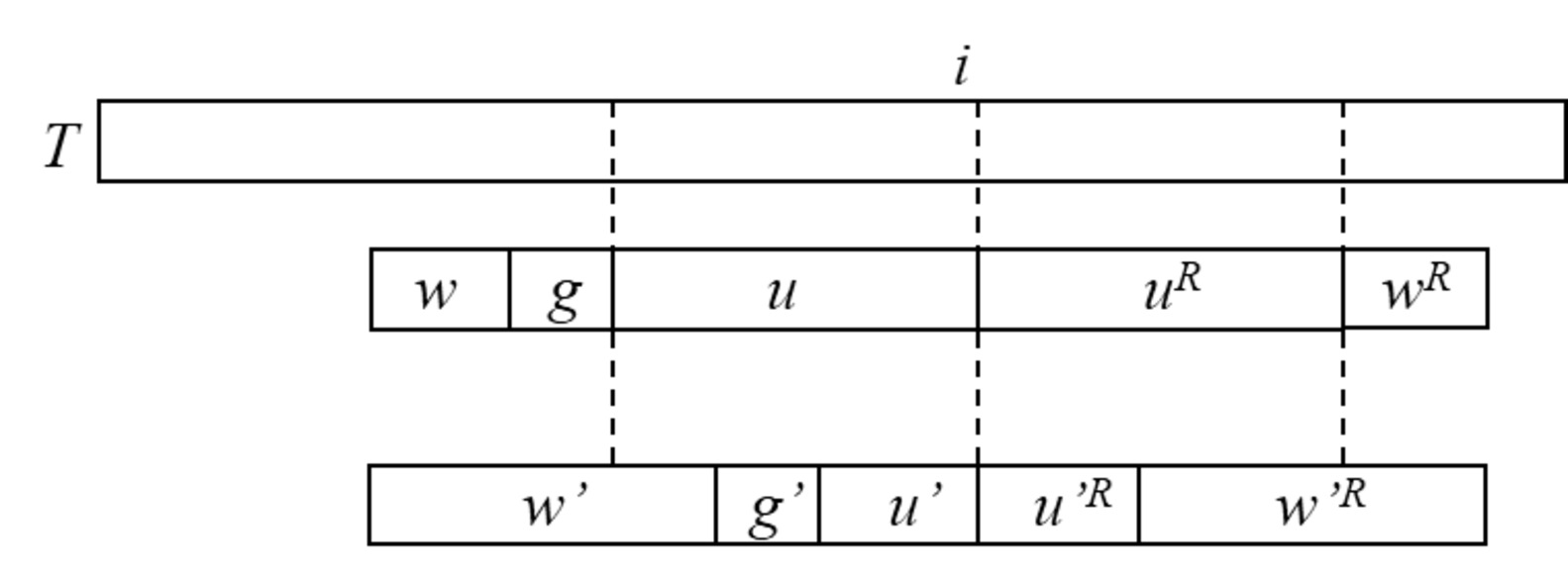}
	}
	\caption{Illustration for Lemma~\ref{lem:type-2_position}.}
	\label{fig:type-2_position}
\end{figure}

\begin{proof}
	See Figure~\ref{fig:type-2_position}.
	By definition, it is clear that
	any SAGP for pivot $i$
	must end at position $i+2$ or after that.
	Now, assume on the contrary that there exists a SAGP
	$w'g'u'\rev{u'}\rev{w'}$ for pivot $i$
	such that $i+|\rev{u'}\rev{w'}| > i+|\rev{u}|$
	(it ends after position $i+|\rev{u}|$),
	where $uu^R$ is the maximal palindrome centered at position $i$.
	Recall that since $i$ is a type-2 position,
	we have $|u'| < |u|$.
	Let $\rev{w}$ be the suffix of $\rev{w'}$
	of size $|\rev{u'}\rev{w'}|-|\rev{u}|$.
	Then, there exists a SAGP $wgu\rev{u}\rev{w}$ for pivot $i$
	where $|g| = |g'|$ and $u\rev{u}$ is the maximal palindrome centered at $i$.
	However, this contradicts that $i$ is a type-2 position.
	Hence, any SAGP for pivot $i$ must end at position
	$i+|\rev{u}|$ or before that.	
\end{proof}

\begin{lemma} \label{lem:wsize_of_type-2_position}
	For any type-2 position $i$ in string $T$, 
	if $w g u \rev{u} \rev{w}$ is a canonical longest SAGP for pivot $i$,
	then $|w|=1$.
\end{lemma}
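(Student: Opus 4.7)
My plan is to argue by contradiction. Suppose $i$ is a type-2 position and $wgu\rev{u}\rev{w}$ is a canonical longest SAGP at pivot $i$ with $|w|\geq 2$. I will construct another SAGP at pivot $i$ with the same arm length but with a strictly longer central palindrome, contradicting canonicity.

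The key structural input is Lemma~\ref{lem:type-2_position}, which tells us that any SAGP for pivot $i$ ends no later than position $i+\Pals[i]$, and so $|u|+|w|\leq \Pals[i]$. Consequently the interval $[i-|u|-|w|+1,\ i+|u|+|w|]$ lies inside the maximal palindrome centered at $i$. Combined with the fact that the right end of the SAGP gives $T[i+|u|+1..i+|u|+|w|]=\rev{w}$, the palindromic symmetry $T[i+j]=T[i-j+1]$ around $i$ (valid for $j\leq \Pals[i]$) yields the identity $T[i-|u|-|w|+1..i-|u|]=w$. This is the one real piece of index work in the proof, and I expect it to be the main obstacle: one must line the ranges up so that the copy of $w$ induced by the symmetry sits exactly to the left of $u$.

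Given this identity, I define a new candidate SAGP by setting $w':=w[1]$ and $u':=w[2..|w|]\,u$, placing $w'$ in $T$ at the position where the original $w[1]$ already sits, namely position $i-|u|-|w|-|g|+1$. Then $|w'|=1$, $|u'|=|u|+|w|-1$, and the corresponding gap $g'$ has length $|g|\geq 1$. Checking that $u'\rev{u'}$ is a palindrome with pivot $i$ reduces to two routine substring equalities: the left half $T[i-|u'|+1..i]$ splits as $T[i-|u|-|w|+2..i-|u|]\cdot T[i-|u|+1..i]=w[2..|w|]\,u=u'$ using the identity above together with $T[i-|u|+1..i]=u$, and the right half follows symmetrically. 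The condition that $w'$ occur in $T[1..i-|u'|-1]$ is immediate because $w[1]$ appears at position $i-|u|-|w|-|g|+1\leq i-|u|-|w|=i-|u'|-1$.

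Finally I compare parameters. The new SAGP $w'g'u'\rev{u'}\rev{w'}$ has arm length $|w'|+|u'|=1+(|u|+|w|-1)=|u|+|w|$, the same as the original, hence it is also a longest SAGP for pivot $i$. However, $|u'|=|u|+|w|-1\geq |u|+1>|u|$ because $|w|\geq 2$. This contradicts the assumption that $wgu\rev{u}\rev{w}$ is canonical, since canonical longest SAGPs must attain the maximum $|u|$ among all longest SAGPs for their pivot. Hence $|w|=1$.
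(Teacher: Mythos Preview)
Your proof is correct and follows essentially the same route as the paper's: assuming $|w|\ge 2$, you use Lemma~\ref{lem:type-2_position} to place $\rev{u}\rev{w}$ inside the maximal palindrome, then reparse the very same occurrence of the SAGP with a shorter outer part and a longer central palindrome (you peel off one character, the paper peels off an arbitrary nonempty prefix $w_1'$), obtaining another longest SAGP with strictly larger $|u|$ and hence a contradiction to canonicity. Your explicit verification that $T[i-|u|-|w|+1..i-|u|]=w$ via the palindromic symmetry is precisely the content of the paper's remark that ``$w_1u_1$ is a suffix of $u$'', just spelled out at the level of indices.
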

\begin{figure}[t]
	\centerline{
		\includegraphics[scale=0.33, clip]{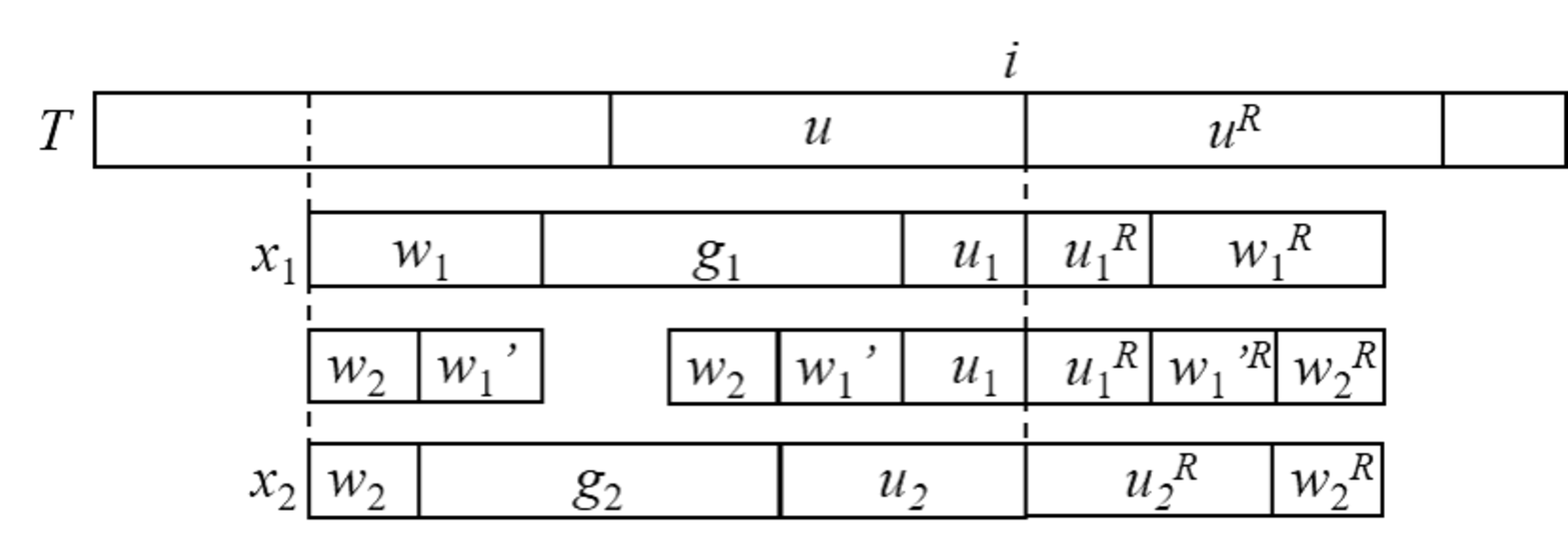}
	}  
	\caption{Illustration for Lemma~\ref{lem:wsize_of_type-2_position}.}
	\label{fig:type-2_w_size}
\end{figure}

\begin{proof}
	Let $x_1 = w_1g_1u_1\rev{u_1}\rev{w_1}$
	be a canonical longest SAGP for pivot $i$, and 
	on the contrary, suppose that $|w_1| \geq 2$. See also Fig~\ref{fig:type-2_w_size}.
	Then we can rewrite $w_1 = w_2 w_1'$ for two non-empty strings $w_2$ and $w_1'$.
	Let $u\rev{u}$ be the maximal palindrome centered at $i$.
	Since the position $i$ is type-2, 
	$\rev{u_1}\rev{w_1}$ is a prefix of $\rev{u}$ by Lemma~\ref{lem:type-2_position}, so that $w_1 u_1$ is a suffix of $u$.
	Moreover, let $u_2=w_1' u_1$ and $g_2$ be a string satisfying $g_2w_1' =w_1'g_1$.
	Then 
	$x_2 = w_2 g_2 \underline{u_2} \rev{u_2} \rev{w_2} 
	= w_2 g_2 w_1' u_1 \underline{\rev{u_2}} \rev{w_2} 
	= w_2 \underline{g_2 w_1'} u_1 \rev{ u_1} \rev{w_1'} \rev{w_2} 
	= \underline{w_2 w_1'} g_1 u_1 \rev{ u_1} \rev{w_1'} \rev{w_2}
	= w_1 g_1 u_1 \rev{u_1}\underline{\rev{w_1'} \rev{w_2}}$ 
	$= w_1 g_1 u_1 \rev{u_1}\rev{w_1} = x_1$, 
	that shows $x_2$ is also a SAGP for pivot $i$.
	Because $\armlen(x_2)= |w_2 u_2| = |w_1 u_1| = \armlen(x_1)$, $x_2$ is also a \emph{longest} SAGP for pivot $i$. Because $u_2 = w_1'u_1$ and $w_1' \neq \varepsilon$, we have $|u_2| < |u_1|$, which contradicts that $x_1$ is a \emph{canonical} longest SAGP for pivot $i$.
\end{proof}

For every type-2 position $i$ in $T$,
let $u=T[i..i+\Pals[i]]$.
By Lemma~\ref{lem:wsize_of_type-2_position}, any canonical longest SAGP is of the form $cgu\rev{u}c$ for $c\in \Sigma$.
For each $2 \leq k \leq \Pals[i]$, let $c_k = \rev{u}[k]$, and
let $\rev{u_k}$ be the proper prefix of $\rev{u}$ of length $k-1$.
Now, observe that the largest value of $k$ for which $\LeftMost[c_k] \leq i-|u_k|-1$
corresponds to a canonical longest SAGP for pivot $i$,
namely, $c_k g_k u_k \rev{u_k} c_k$ is a canonical longest
SAGP for pivot $i$, where $g_k = T[\LeftMost[c_k]+1..i-|u_k|]$.
In order to efficiently find the largest value of such, we consider a function $\FindR(t,i)$ defined by
\[
\FindR(t, i) = \min\{r \mid t \leq r < i, T[l] = T[r] \mbox{ for } 1 \leq l < r  \} \cup \{+\infty \}.
\]
\begin{lemma} \label{lem:FindR is useful}
	For any type-2 position $i$ in $T$,
	quadruple $(i,1,r - \LeftMost[T[r]], i - r)$ represents a canonical longest SAGP for pivot $i$, 
	where $r = \FindR(i - \Pals[i] + 1, i) \neq \infty$. 
	Moreover, its gap is the longest among all the canonical longest SAGPs for pivot $i$.
\end{lemma}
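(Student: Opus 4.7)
The plan is to reduce the statement to an elementary optimisation over the length of $u$ and the position of $w$, and then match it to the definition of $\FindR$. First, I would invoke Lemma~\ref{lem:wsize_of_type-2_position} to restrict every canonical longest SAGP for the type-2 pivot $i$ to the form $c\,g\,u\,\rev{u}\,c$ with a single-character $w=\rev{w}=c\in\Sigma$; because $i$ is type-2, the palindrome $u\rev{u}$ cannot coincide with the maximal palindrome centred at $i$, so additionally $|u|<\Pals[i]$. The tail character $c$ is forced to equal $T[i+|u|+1]$, so such an SAGP exists iff that character also occurs somewhere in the prefix $T[1..i-|u|-1]$.

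Second, I would apply the palindromic identity $T[i+k]=T[i-k+1]$, which is valid for all $1\le k\le\Pals[i]$ by the definition of $\Pals[i]$. Taking $k=|u|+1\le\Pals[i]$ yields $T[i+|u|+1]=T[i-|u|]$. Setting $r=i-|u|$, the existence condition collapses to $\LeftMost[T[r]]<r$, and maximising $|u|$ is the same as minimising $r$ over the range $[i-\Pals[i]+1,\,i-1]$. By the very definition of $\FindR$, this minimum is exactly $r=\FindR(i-\Pals[i]+1,i)$, which is assumed to be finite; the corresponding canonical longest choice of palindrome is $u=T[r+1..i]$ of length $i-r$.

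Third, with $|u|=i-r$ fixed at its optimum, the single character $w=T[r]$ may be placed at any position $l<r$ with $T[l]=T[r]$, and the resulting gap has length $r-l$. This is maximised by $l=\LeftMost[T[r]]$, giving gap length $r-\LeftMost[T[r]]$. The resulting quadruple is precisely $(i,\,1,\,r-\LeftMost[T[r]],\,i-r)$, which simultaneously witnesses both conclusions: it is a canonical longest SAGP for pivot $i$, and among all canonical longest SAGPs at this pivot (all of which share the same optimal $|u|$) it has the longest gap.

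The main subtlety I expect is justifying the palindromic identity at $k=|u|+1$; this requires the strict bound $|u|<\Pals[i]$, which is exactly the type-2 hypothesis repackaged. Once that identity is secured, both conclusions of the lemma follow by a direct unpacking of the definitions of $\FindR$ and $\LeftMost$, and no further machinery is needed.
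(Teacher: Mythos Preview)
Your argument is correct and follows essentially the same route as the paper's own proof: reduce via Lemma~\ref{lem:wsize_of_type-2_position} to single-character $w$, identify the tail character with $T[r]$ where $r=i-|u|$, and then minimise $r$ through $\FindR$ before maximising the gap with $\LeftMost$. The only difference is that you spell out the palindromic identity $T[i+|u|+1]=T[i-|u|]$ and the range constraint $r\in[i-\Pals[i]+1,\,i-1]$ explicitly, whereas the paper leaves both implicit.
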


\begin{figure}[t]
	\centerline{
		\includegraphics[scale=0.33, clip]{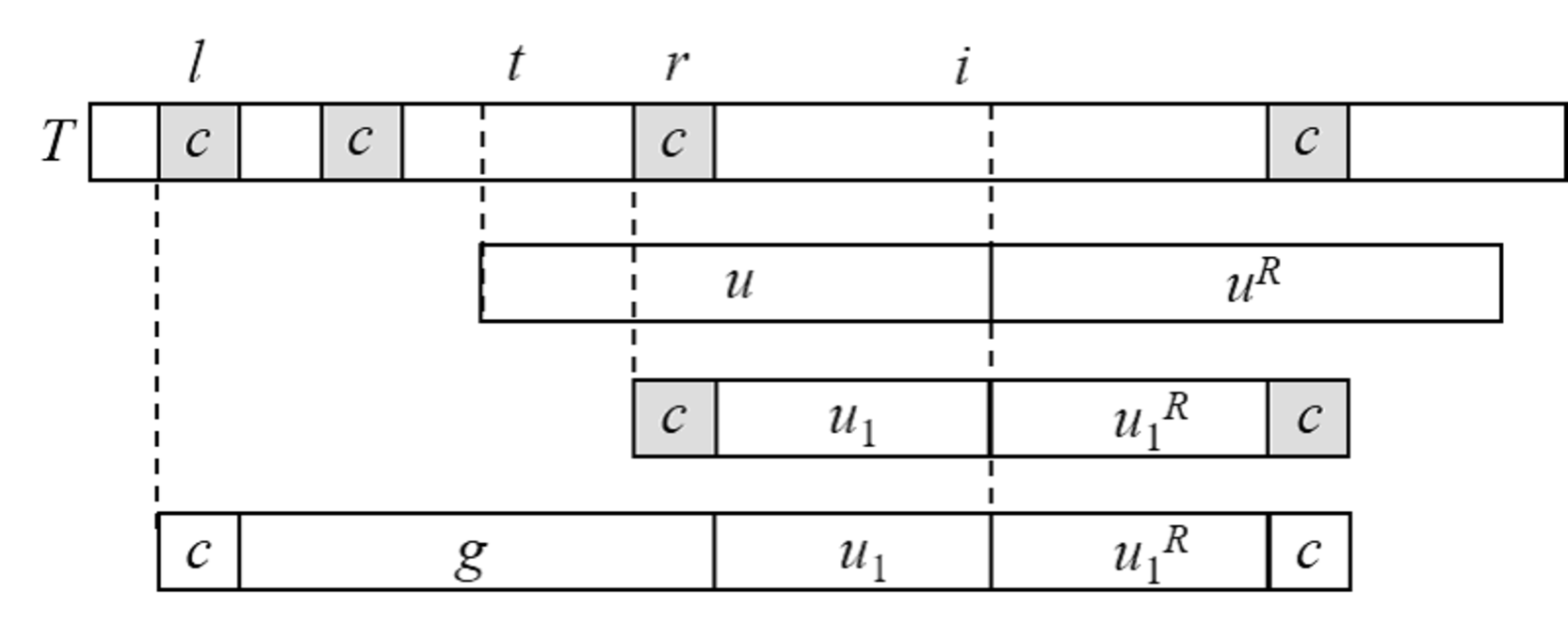}
	}  
	\caption{Illustration for Lemma~\ref{lem:FindR is useful}.}
	\label{fig:type-2_cano_longest}
\end{figure}

\begin{proof}
	See Figure~\ref{fig:type-2_cano_longest}.
	Let $t$ be the beginning position of $u$ in $T$, namely, $t = i-|u|+1$.
	Let $r = \FindR(t, i) < \infty$, and let $c = T[r]$.
	Then by definition of $\FindR(t, i)$, there exists $1 \leq l < r$ satisfying $T[l] = c$.
	Therefore, $x = (i, 1, r-l, i-r)$ is a SAGP for pivot $i$. 
	Moreover, $x$ is \emph{canonical longest} SAGP because $r$ is minimized, so that $|u_1|=i-r$ is maximized while $|w|$ is always 1.
	Recall that $\LeftMost[c]$ is the leftmost position $l$ satisfying $T[l] = c$.
	Hence, the gap size of the canonical longest SAGP $(i,1,r - \LeftMost[T[r]], i - r)$ is the longest.
\end{proof}

By Lemma~\ref{lem:FindR is useful}, we can compute a canonical longest SAGP for any type-2 pivot $i$ in $O(1)$ time,
assuming that the function $\FindR(t, i)$ returns a value in $O(1)$ time.
We define an array $\FindRarray$ of size $n$ by
\begin{align} \label{eqn:FindRarray}
\FindRarray[t] = \min\{r \mid t \leq r, \ \ T[l] = T[r] \mbox{ for } 1 \leq l < r  \} \cup \{+\infty \},
\end{align}
for $1 \leq t \leq n$. 
If the array $\FindRarray$ has already been computed, then $\FindR(t,i)$ can be obtained in $O(1)$ time by
$\FindR(t,i) = \FindRarray[t]$ if $\FindRarray[t] < i$, and $+\infty$ otherwise.

Algorithm~\ref{alg:compute FindRarray} shows a pseudo-code to compute $\FindRarray$.
Table~\ref{table:example_of_FindRarray} shows an example.

%

 \begin{algorithm2e}[t]
	\caption{constructing the array $\FindRarray$}
	\label{alg:compute FindRarray}
	\KwIn{string $T$ of length $n$}
	\KwOut{array $\FindRarray$ of size $n$}
	$\mathit{minpos} = +\infty$\;
	\For{$i=n$ {\bf downto} $1$}{
		\lIf{$\LeftMost[T[i]] \le i$}{$\mathit{minpos} = i$}
		$\FindRarray[i] = \mathit{minpos}$\;
	}
\end{algorithm2e}

\begin{table}[t]
	\caption{Arrays $\LeftMost$, $\NextPos$, and $\FindRarray$ for a string $T = {\tt dbbaacbcbad}$.
		For the sake of understanding, we also provide the values of $\minLout$ and $\minLin$ in the $i$-th loop of Algorithm~\ref{alg:compute FindRarray}.
		These values are computed from right to left.}
	\vspace{0.4cm}
	\label{table:example_of_FindRarray}
	\centering
	\begin{tabular}{|c|*{20}{>{\centering\arraybackslash}p{2.5em}|}}
		\hline
		& $\LeftMost$ \\ \hline
		\tt a & 4   \\
		\tt b & 2   \\
		\tt c & 6   \\
		\tt d & 1   \\ \hline
	\end{tabular}
	\quad
	\begin{tabular}{|c|*{20}{>{\centering\arraybackslash}p{0.8em}|}}
		\hline
		& 1 & 2 & 3	& 4	& 5	& 6	& 7	& 8	& 9	& 10 & 11 \\ \hline
		$T$		&\tt d	&\tt b	&\tt b &\tt a &\tt a &\tt c &\tt b &\tt c &\tt b &\tt a	&\tt d \\ \hline
		$\NextPos$ & 11 & 3 & 7 & 5 & 10 & 8 & 9  & $\infty$ & $\infty$ & $\infty$ & $\infty$ \\ \hline
		$\FindRarray$		& 3	& 3	& 3	& 5	& 5 & 7 & 7 & 8 & 9 & 10 & 11 \\ \hline
	\end{tabular}
\end{table}

\begin{lemma} \label{lem:correctness of FindRarray}
	Algorithm~\ref{alg:compute FindRarray} correctly computes the array $\FindRarray$ in $O(n)$ time and space.
\end{lemma}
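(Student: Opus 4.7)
The plan is to prove separately that the two running quantities $\minLin$ and $\minLout$ maintained inside the loop correctly compute two complementary families of candidates for $\FindRarray[i]$, and then to amortize the total work against the lifetime of each stack entry.

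First, I would decompose the set defining $\FindRarray[i]$: any index $r \geq i$ with $T[r]$ admitting an earlier occurrence falls into either (a) $\LeftMost[T[r]] \in [i, r-1]$, or (b) $\LeftMost[T[r]] < i$. Writing $A_i$ and $B_i$ for the minima of these two classes, one has $\FindRarray[i] = \min(A_i, B_i)$. I would then show $\minLin = A_i$ at the end of iteration $i$: when $j$ is processed, $\SecondOcc[T[j]]$ is set to the smallest $r > j$ with $T[r] = T[j]$, and taking the minimum of these values over $j = n, n-1, \ldots, i$ yields precisely the smallest $r$ for which some $j \in [i, r-1]$ has $T[j] = T[r]$.

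The core of the proof will be the claim that, at the end of iteration $i$, the top of $\Stack$ equals $B_i$, with $\Stack$ empty exactly when no $r \geq i$ has $\LeftMost[T[r]] < i$. I would establish this by downward induction on $i$ under the stronger invariant that if $\Stack = p_1 < p_2 < \cdots < p_k$ (top to bottom), then $p_1 = B_i$, each deeper $p_j$ equals $\min\{r > p_{j-1} : \LeftMost[T[r]] < p_{j-1}\}$, and this chain of successive minima terminates at $p_k$ (no $r > p_k$ satisfies $\LeftMost[T[r]] < p_k$). The base case $i = n$ is direct, and the inductive step splits on whether $\LeftMost[T[i]] < i$ (Case A: $i$ stays on top and becomes the new $B_i$; since no position has $\LeftMost$ equal to $i$ in this case, the old characterization of the deeper elements is equivalent to the new one) or $\LeftMost[T[i]] = i$ (Case B: $i$ is popped, possibly together with a prefix $q_1, \ldots, q_\ell$ of the old stack satisfying $\LeftMost[T[q_h]] \geq i$).

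The main obstacle will be Case B, verifying that the new top $q_{\ell+1}$ really equals $B_i$. I would eliminate every potential witness $r \in [i, q_{\ell+1}-1]$: the position $i$ itself has $\LeftMost[T[i]] = i$; every popped entry $q_h$ has $\LeftMost[T[q_h]] \geq i$, which is precisely why the while loop discarded it; positions $r \in (i, q_1)$ fail because $q_1 = B_{i+1}$ forces $\LeftMost[T[r]] \geq i+1$; and positions $r \in (q_{h-1}, q_h)$ fail by the inductive clause for $q_h$, giving $\LeftMost[T[r]] \geq q_{h-1} > i$. Combined with $\LeftMost[T[q_{\ell+1}]] < i$ (the halting condition of the while loop), this shows $q_{\ell+1} = B_i$; the clauses characterizing deeper entries transfer verbatim because they do not reference $i$. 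In the degenerate case where the entire old stack is popped, the termination clause of the old invariant (no $r > q_m$ has $\LeftMost[T[r]] < q_m$), together with the same analysis, rules out every $r \geq i$, so $B_i = +\infty$ matches $\minLout = +\infty$. Finally, the $O(n)$ time and space bound will follow easily: each index is pushed onto $\Stack$ at most once (at its own iteration) and popped at most once, so the cumulative work of the while loop is $O(n)$, and every other per-iteration step runs in $O(1)$ time given constant-time access to $\LeftMost$, $\FirstOcc$, $\SecondOcc$, and $\Stack$; since all four structures have size $O(n)$ after the standard remapping of the alphabet to $[1, n]$, the total space is $O(n)$.
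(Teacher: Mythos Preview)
Your argument is correct and follows essentially the same route as the paper: split the candidates for $\FindRarray[i]$ into those whose earlier witness lies in $[i,r-1]$ versus in $[1,i-1]$, identify the two minima with $\minLin$ and $\Stack.top$ respectively, and amortize the while-loop via a push-once/pop-once count. Your stack invariant (a chain of successive minima) is in fact sharper than the paper's assertion that $\Stack$ holds exactly the set $\{r : \LeftMost[T[r]] < t \leq r\}$, which is not literally true since deeper stack entries may satisfy $\LeftMost[T[r]] \geq t$; both versions nonetheless yield the same correct conclusion about $\Stack.top$.
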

	
	\begin{proof}
		The correctness of the computation of $\LeftMost$ is obvious.
		For each $i$, we maintaining invariant $\mathit{minpos} = \min\{r\mid i \le t, T[l] = T[r] \mbox{ for some } 1 \le l < r \}$.
		$\mathit{minpos}$ can be maintained by updating it if $i$ is not the leftmost occurrence position of $T[i]$.
		Since the definition of $\FindRarray[i]$ is the same as $\mathit{minpos}$ for each $i$,
		$\FindRarray[i]$ can be computed by assigning $\mathit{minpos}$ to $\FindRarray[i]$.
	\end{proof}

	By Lemma~\ref{lem:correctness of FindRarray},
	we can compute $\LongestSAGPtwo(T)$ for type-2 positions as follows.
	
	\begin{theorem} \label{theo:type-2_algorithm_2}
	  Given a string $T$ of length $n$ over an integer alphabet of size $n^{O(1)}$,
	  we can compute $\LongestSAGPtwo(T)$ 
	  in $O(n +\occ_2)$ time and $O(n)$ space,
	  where $\occ_2 = |\LongestSAGPtwo(T)|$.  
	\end{theorem}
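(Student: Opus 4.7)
The plan is to complement Lemma~\ref{lem:FindR is useful}, which produces a single canonical longest SAGP per type-2 pivot (the one with the longest gap), with an enumeration over the remaining valid choices of $w$. The key observation is that once the arm length $|u|=i-r$ is fixed by $r=\FindR(i-\Pals[i]+1,i)$, every canonical longest SAGP at pivot $i$ has the form $(i,1,r-b,i-r)$ where $b$ ranges over the occurrences of the character $T[r]$ in $T[1..r-1]$; this set can be swept in time proportional to its size using the $\NextPos$ array already built in Algorithm~\ref{alg:compute_SAGP}.

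The preprocessing consists of building $\FindRarray$ in $O(n)$ time and $O(n)$ space via Lemma~\ref{lem:correctness of FindRarray}, on top of the arrays $\Pals$, $\LeftMost$, and $\NextPos$ produced by Algorithm~\ref{alg:compute_SAGP}. After that, for each $i \in \Pos_2(T)$ I perform the constant-time lookup $\FindR(i-\Pals[i]+1,i)$, which equals $\FindRarray[i-\Pals[i]+1]$ when that value is strictly less than $i$ and $+\infty$ otherwise. If $r=+\infty$, there are no canonical longest SAGPs at pivot $i$; otherwise Lemma~\ref{lem:wsize_of_type-2_position} forces $|w|=1$, the choice of $r$ makes $|u|=i-r$ maximal, and palindrome symmetry inside the maximal palindrome centered at $i$ gives $T[i+|u|+1]=T[r]$, so every valid $w$ is indeed a single character equal to $T[r]$ placed at some position $b<r$.

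To list all such $b$, I traverse the linked list through occurrences of $T[r]$: set $b_0=\LeftMost[T[r]]$ and, while $b_j<r$, emit the quadruple $(i,1,r-b_j,i-r)$ and advance to $b_{j+1}=\NextPos[b_j]$. Since $\NextPos$ threads all occurrences of any fixed character in strictly increasing order, this visits each occurrence of $T[r]$ in $T[1..r-1]$ exactly once and emits each canonical longest SAGP at pivot $i$ exactly once. Summing the work, the preprocessing and the outer loop over pivots cost $O(n)$, while the inner enumeration costs $O(k_i)$ at pivot $i$ where $k_i$ is the number of canonical longest SAGPs emitted there, giving $O(n+\occ_2)$ time and $O(n)$ space overall.

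The one delicate point to verify cleanly is the bijection between positions visited on the $\NextPos$-chain and the canonical longest SAGPs at pivot $i$. The easy direction (every emitted quadruple is a canonical longest SAGP) follows from Lemma~\ref{lem:FindR is useful} applied with each $b_j$ in place of $\LeftMost[T[r]]$, since every $b_j < r$ is still a left occurrence of the correct character and the arm length $|u|=i-r$ is unchanged. The other direction (no canonical longest SAGP is missed) uses that optimality forces $|u|=i-r$ and the single-character $w$ must equal $T[r]$, so any valid $b$ lies in $\{j : T[j]=T[r],\ j<r\}$, which is precisely the set traversed by the chain.
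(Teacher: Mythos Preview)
Your proposal is correct and follows essentially the same approach as the paper: use $\FindRarray$ to locate $r$ and hence the first canonical longest SAGP (the one with the largest gap, starting at $b_0=\LeftMost[T[r]]$), then walk the $\NextPos$-chain through the remaining occurrences of $T[r]$ to the left of $r$, outputting one SAGP per step. The paper's proof is terser and phrases the stopping criterion as ``while the gap size is positive,'' which is equivalent to your condition $b_j<r$; your write-up additionally spells out the palindrome-symmetry argument that pins $\rev{w}=T[r]$ and the bijection between chain positions and outputs, which the paper leaves implicit.
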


	\begin{proof}
	For a given $T$, we first compute the array $\FindRarray$ by Algorithm~\ref{alg:compute FindRarray}.
	The correctness of the computation of $\NextPos$ is obvious.
	By Lemma~\ref{lem:FindR is useful}, we can get a canonical longest SAGP
	$x_1 = (i,1,|g_1|, \Pals[i]-r)$ 
	if $\Pals[i]-r\geq 1$, 
	in $O(1)$ time
	by referring to $\LeftMost$ and $\FindRarray$.
	Note that $x_1$ is the one whose gap $|g_1|$ is the longest.
	Let $b_1 = i - \Pals[i] + r - |g_1|$
	be the beginning position of $x_1$ in $T$.
	Then the next shorter canonical longest SAGP for the same pivot $i$ begins at position $b_2 = \NextPos[b_1]$.
	By repeating this process $b_{j+1} = \NextPos[b_j]$ while the gap size 
	$|g_j| = i - \Pals[i] + r - b_j$ is positive,
	we obtain all the canonical longest SAGPs for pivot $i$.
	Overall, we can compute all canonical longest SAGPs for all pivots in $T$ in $O(n + \occ_2)$ time.
	The space requirement is clearly $O(n)$.
	\end{proof}	
			
	We now have the main theorem from
	Theorem~\ref{theo:maximal_suffix_tree_linear},
	Lemma~\ref{lem:determine_types},
	Lemma~\ref{lem:correctness_alg:compute_SAGP},
	and Theorem~\ref{theo:type-2_algorithm_2} as follows.

	\begin{theorem}
	Given a string $T$ of length $n$ over an integer alphabet of size $n^{O(1)}$,
	Algorithm~\ref{alg:compute_SAGP} can compute $\LongestSAGP(T)$ 
	  	in optimal $O(n + \occ)$ time and $O(n)$ space,
	 	 where $\occ = |\LongestSAGP(T)|$.  
	\end{theorem}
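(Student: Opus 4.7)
The plan is to simply assemble the pieces provided earlier in the excerpt, since Algorithm~\ref{alg:compute_SAGP} is explicitly structured as a pipeline of independently analyzed subroutines. Correctness is already granted by Lemma~\ref{lem:correctness_alg:compute_SAGP}, so the proof reduces entirely to accounting for the time and space of each stage and summing.

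First I would walk through Algorithm~\ref{alg:compute_SAGP} line by line and charge each step. Line~\ref{alg:compute_SAGP:line:pals} computes $\Pals$ in $O(n)$ time and space (Manacher's algorithm, invoked via the referenced auxiliary procedure). The two subsequent \textbf{for}-loops each make a single pass over $T$, performing constant-time updates to $\LeftMost$ and $\NextPos$ and constant-time tests for the type-1 condition $\LeftMost[i+\Pals[i]+1] < i-\Pals[i]$ of Lemma~\ref{lem:determine_types}; hence these loops run in $O(n)$ time and $O(n)$ space, partitioning $\{1,\dots,n\}$ into $\Pos_1(T)$ and $\Pos_2(T)$. Line~\ref{alg:compute_SAGP:line:computeOne} invokes the suffix-tree based procedure of Theorem~\ref{theo:maximal_suffix_tree_linear}, giving $O(n+\occ_1)$ time and $O(n)$ space. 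Line~\ref{alg:compute_SAGP:line:computeTwo} invokes the procedure of Theorem~\ref{theo:type-2_algorithm_2}, giving $O(n+\occ_2)$ time and $O(n)$ space.

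Then I would sum the bounds. Since $\LongestSAGPone(T)$ and $\LongestSAGPtwo(T)$ are disjoint and their union is $\LongestSAGP(T)$, we have $\occ = \occ_1 + \occ_2$, so the total running time is $O(n) + O(n+\occ_1) + O(n+\occ_2) = O(n+\occ)$. All auxiliary arrays ($\Pals$, $\LeftMost$, $\NextPos$, plus the structures used internally by the two sub-algorithms) occupy $O(n)$ space simultaneously, which can be reused across the stages, giving $O(n)$ total space. Finally, optimality is immediate: any algorithm must read the input of length $n$ and write the $\occ$ outputs, so $\Omega(n+\occ)$ is a trivial lower bound matched by our upper bound.

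There is no real obstacle here: the theorem is a packaging statement whose substance has already been proven in Theorem~\ref{theo:maximal_suffix_tree_linear}, Lemma~\ref{lem:determine_types}, Lemma~\ref{lem:correctness_alg:compute_SAGP}, and Theorem~\ref{theo:type-2_algorithm_2}. The only subtlety worth a sentence is verifying that the preprocessing phase genuinely fits in $O(n)$ time over an integer alphabet of size $n^{O(1)}$ — which it does, since the reduction to alphabet $[1,n]$ described in the Preliminaries runs in $O(n)$ time, and all of Manacher's algorithm, the $\LeftMost$/$\NextPos$ scans, and the constant-time type test then work on the reduced alphabet without issue.
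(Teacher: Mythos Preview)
Your proposal is correct and mirrors the paper's approach exactly: the paper does not even spell out a proof but simply states that the theorem follows from Theorem~\ref{theo:maximal_suffix_tree_linear}, Lemma~\ref{lem:determine_types}, Lemma~\ref{lem:correctness_alg:compute_SAGP}, and Theorem~\ref{theo:type-2_algorithm_2}. Your line-by-line accounting and the observation that $\occ = \occ_1 + \occ_2$ (by disjointness of $\LongestSAGPone(T)$ and $\LongestSAGPtwo(T)$) is precisely the intended assembly, just made more explicit.
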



\section{Experiments}

In this section, we show some experimental results which compare
performance of our algorithms for computing $\LongestSAGPone(T)$. 
We implemented the na\"ive quadratic-time algorithm (\textsf{Na\"ive}), 
the simple quadratic-time algorithm which traverses suffix arrays (\textsf{Traverse}),
and three versions of the algorithm based on suffix array and predecessor/successor data structure, each employing
red-black trees (\textsf{RB tree}),
Y-fast tries (\textsf{Y-fast trie}),
and van Emde Boas trees\footnote{We modified a van Emde Boas tree implementation from \url{https://code.google.com/archive/p/libveb/} so it works with Visual C++.} (\textsf{vEB tree}),
as the predecessor/successor data structure.

We implemented all these algorithms with Visual C++ 12.0 (2013),
and performed all experiments on 
a PC (Intel\copyright~Xeon CPU W3565 3.2GHz, 12GB of memory)
running on Windows 7 Professional.
In each problem, we generated a string randomly and got the average time for ten times attempts.

\begin{table}[t]
 \centering{
    \caption{Running times (in milli-sec.) on randomly generated strings of length $10000$, $50000$, and $100000$ with $|\Sigma|=10$.}
    \vspace*{0.3cm}
    \begin{tabular}{|r|r|r|r|r|r|} \hline
      \multicolumn{1}{|c|}{$n$} & \multicolumn{1}{|c|}{\textsf{Na\"ive}} 
      & \multicolumn{1}{|c|}{\textsf{Traverse}} & \multicolumn{1}{|c|}{\textsf{RB tree}} 
      & \multicolumn{1}{|c|}{\textsf{vEB tree}} & \multicolumn{1}{|c|}{\textsf{Y-fast trie}}  \\  \hline \hline
10000&	247.2&	3.8&	6.3& 85.7&	11.7 \\
50000	&      7661.0&	18.6& 37.2&		128.9&	62.6 \\
100000&	32933.2&	38.7& 80.3&		191.9&	133.7 \\ \hline
    \end{tabular}
    \label{table:running_n^5}
 }
\end{table}

We tested all programs on strings
of lengths $10000$, $50000$, and $100000$, all from an alphabet of size $|\Sigma|=10$.
Table~\ref{table:running_n^5} shows the results.
From Table~\ref{table:running_n^5},
we can confirm that \textsf{Traverse} is the fastest,
while \textsf{Na\"ive} is by far the slowest.
We further tested the algorithms 
on larger strings with $|\Sigma|=10$.
In this comparison, we excluded \textsf{Na\"ive} as it is too slow. 
The results are shown in Figure~\ref{figure:running_10^6}.
As one can see, \textsf{Traverse} was the fastest for all lengths.
We also conducted the same experiments varying alphabet sizes
as $2$, $4$, and $20$, and obtained similar results 
as the case of alphabet size $10$.

To verify why \textsf{Traverse} runs fastest,
we measured the average numbers of suffix array entries
which are traversed, per pivot and output (i.e., canonical longest SAGP).
Figure~\ref{figure:traverse} shows the result.
We can observe that although in theory $O(n)$ entries
can be traversed per pivot and output for a string of length $n$,
in both cases the actual number 
is far less than $n$ and grows very slowly as $n$ increases.
This seems to be the main reason why \textsf{Traverse}
is faster than \textsf{RB tree}, \textsf{vEB tree}, and \textsf{Y-fast trie}
which use sophisticated but also complicated predecessor/successor data structures.

\begin{figure}[t]
 \begin{minipage}{0.48\hsize}
  \begin{center} 
    \includegraphics[width=65mm]{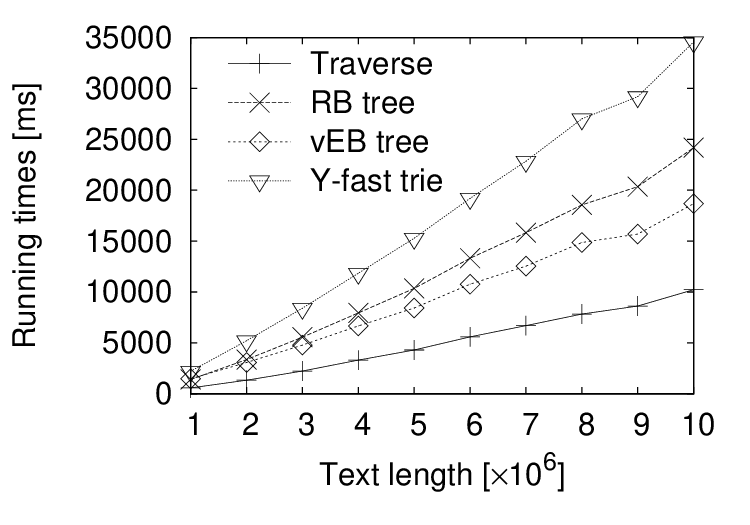}
    \end{center}
	\caption{Running times (in milli-sec.) on randomly generated strings of length from $1000000$ to $10000000$ with $|\Sigma|=10$.}
	\label{figure:running_10^6}
 \end{minipage}
 \hfill
 \begin{minipage}{0.48\hsize}
  \begin{center}
    \includegraphics[width=65mm]{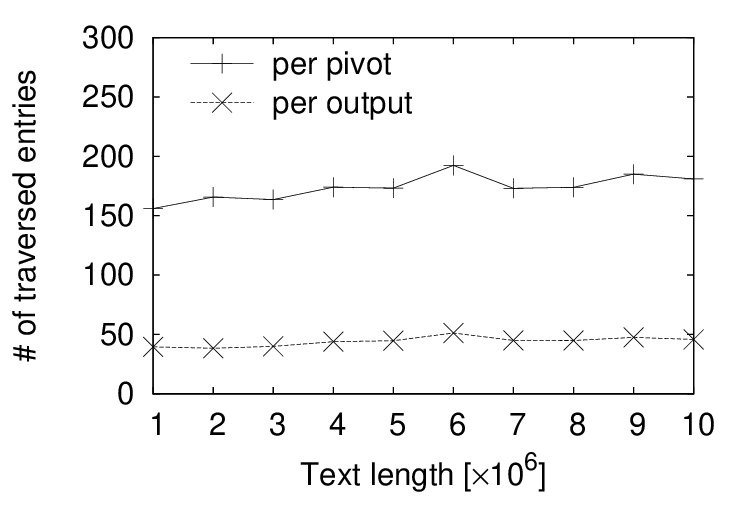}
    \end{center}
   \caption{Average numbers of traversed entries of suffix array per pivot and per output
   	on randomly generated strings.}
   \label{figure:traverse}
 \end{minipage}
\end{figure}


\section{Conclusions and future work}
We proposed several algorithms to compute longest single-arm-gapped palindromes (SAGPs)
for all pivots in a given string $T$ of length $n$.
For the type-1 longest SAGPs,
we presented an $O((n + \occ_1) \log\log n)$-time algorithm
which is based on a suffix array
and a dynamic predecessor/successor data structure,
where $\occ_1$ is the number of type-1 longest SAGPs to output.
We also presented an $O(n+\occ_1)$-time algorithm
based on the suffix tree.
For the type-2 longest SAGPs,
we proposed an $O(n + \occ_2)$-time algorithm,
where $\occ_2$ is the number of type-2 longest SAGPs to output.
Combining the last two results, we obtained an optimal $O(n + \occ)$-time
algorithm for computing all longest SAGPs,
where $\occ = \occ_1 + \occ_2$ is the total number of outputs.
We performed experiments to compare practical performances
of our algorithms for finding type-1 longest SAGPs;
the na\"ive algorithm, the $O(n^2)$-time suffix array based algorithm,
and the improved suffix array based algorithm with 
several kinds of predecessor/successor data structures.

Our future work includes the following:
Is there a linear $O(n + \occ_1)$-time algorithm for
finding type-1 longest SAGPs using a suffix array
(plus some auxiliary arrays), rather than a suffix tree?
Our current best suffix-array-based algorithm
uses $O((n + \occ_1) \log\log n)$ time.
Another question is how many longest SAGPs can be contained in a string.
Finding non-trivial upper bound and/or lower bound for $\occ$
remains open.

\section*{Acknowledgments}

The research of Shintaro Narisada, Kazuyuki Narisawa, and Ayumi Shinohara was supported by 
JSPS KAKENHI Grant Numbers JP15H05706, JP24106010, and ImPACT Program “Tough Robotics Challenge” of Japan Science and Technology Agency.
The research of Diptarama Hendrian is supported by Tohoku University Division for  Interdisciplinary Advance Research and Education.
The research of Shunsuke Inenaga was supported by JSPS KAKENHI Grant Numbers JP26280003 and JP17H01697.

The authors thank anonymous referees for their useful suggestions for improving the quality of the paper and for pointing out some errors in the previous version.

\bibliographystyle{plain} 
\bibliography{ref}

\end{document}